\documentclass{acm_proc_article-sp}
\usepackage{caption}
\usepackage{graphicx} % more modern
\usepackage{xspace}
\usepackage{url}
\usepackage{subfigure}
\usepackage{epsfig}

\usepackage{algorithmic}
\usepackage{algorithm}
\usepackage{mathrsfs}
\makeatletter
 \let\@copyrightspace\relax
 \makeatother
 
\usepackage{amssymb}
\usepackage{amsmath}
\usepackage{amsfonts}

\numberwithin{equation}{section}
\numberwithin{figure}{section}

\newtheorem{thm}{Theorem}

\newtheorem{lemma}[thm]{Lemma}
\newtheorem{corol}[thm]{Corollary}

\newtheorem{defn}[thm]{Definition}
\newtheorem{rem}[thm]{Remark}
\newtheorem{fact}[thm]{Fact}

\newcommand{\comment}[1]{}

\newcommand{\algo}{Layered LSH}
\newcommand*\samethanks[1][\value{footnote}]{\footnotemark[#1]}

\begin{document}
\title{Efficient Distributed Locality Sensitive Hashing}
\numberofauthors{3}
\author{
\alignauthor
Bahman Bahmani \\
\affaddr{Stanford Univesrity \\ Stanford, CA \\bahman@stanford.edu}
\thanks{Research supported by NSF grant 0904314}
\thanks{We also acknowledge financial support from grant \#FA9550-12-1-0411}
\and
\alignauthor
Ashish Goel\\
\affaddr{Stanford Univesrity \\ Stanford, CA \\ashishg@stanford.edu}
\thanks{Research supported by NSF grants 0915040 and NSF 0904314}
\samethanks[2]
\and
\alignauthor
Rajendra Shinde\\
\affaddr{Stanford Univesrity \\ Stanford, CA \\rbs@stanford.edu}
\thanks{Research supported by NSF grant 0915040}
\samethanks[2]
}
\maketitle
\begin{abstract}
Distributed frameworks are gaining increasingly widespread use in applications that process large amounts of data. One important example application is large scale similarity search, for which Locality Sensitive Hashing (LSH) has emerged as the method of choice, specially when the data is high-dimensional. At its core, LSH is based on hashing the data points to a number of buckets such that similar points are more likely to map to the same buckets. To guarantee high search quality, the LSH scheme needs a rather large number of hash tables. This entails a large space requirement, and in the distributed setting, with each query requiring a network call per hash bucket look up, this also entails a big network load. The Entropy LSH scheme proposed by Panigrahy significantly reduces the number of required hash tables by looking up a number of query offsets in addition to the query itself. While this improves the LSH space requirement, it does not help with (and in fact worsens) the search network efficiency, as now each query offset requires a network call. In this paper, focusing on the Euclidian space under $l_2$ norm and building up on Entropy LSH, we propose the distributed Layered LSH scheme, and prove that it exponentially decreases the network cost, while maintaining a good load balance between different machines. Our experiments also verify that our scheme results in a significant network traffic reduction that brings about large runtime improvement in real world applications.
\end{abstract}

\section{Introduction}
\label{sec:intro}

Similarity search is the problem of retrieving data objects similar to a query object. It has become an important component of modern data-mining systems, with applications ranging from de-duplication of web documents, content-based audio, video, and image search \cite{Brian:Kulis, Charikar:multiprobe, google:video:lsh}, collaborative filtering \cite{google-news}, large scale genomic sequence alignment \cite{Buhler}, natural language processing \cite{ravi:NLP:clustering}, pattern classification \cite{cover67}, and clustering \cite{berkhin}.  
%\textbf{NEED MANY MORE REFERENCES FOR EACH APPLICATION}

In these applications, objects are usually represented by a high dimensional feature vector. A scheme to solve the similarity search problem constructs an index which, given a query point, allows for quickly finding the data points similar to it. In addition to the query search procedure, the index construction also needs to be time and space efficient. Furthermore, since today's massive datasets are typically stored and processed in a distributed fashion, where network communication is one of the most important bottlenecks, these methods need to be network efficient, as otherwise, the network load would slow down the whole scheme.

An important family of similarity search methods is based on the notion of Locality Sensitive Hashing (LSH) \cite{im98}. At its core, LSH is based on hashing the (data and query) points into a number of hash buckets such that similar points have higher chances of getting mapped to the same buckets. Then for each query, the nearest neighbor among the data points mapped to a same bucket as the query point is returned as the search result.

LSH has been shown to scale well with the data dimension \cite{im98,KOR98}. However, the main drawback of conventional LSH based schemes is that to guarantee a good search quality, one needs a large number of hash tables. This entails a rather large space requirement for the index, and also in the distributed setting, a large network load, as each hash bucket look up requires a communication over the network. To mitigate the space efficiency issue, Panigrahy \cite{P06} proposed the Entropy LSH scheme, which significantly reduces the number of required hash tables, by looking up a number of query offsets in addition to the query itself. Even though this scheme improves the LSH space efficiency, it does not help with its network efficiency, as now each query offset lookup requires a network call. In fact, since the number of required offsets in Entropy LSH is larger than the number of required hash tables in conventional LSH, Entropy LSH amplifies the network inefficiency issue.

In this paper, focusing on the Euclidian space under $l_2$ norm and building up on the Entropy LSH scheme, we design the Layered LSH method for distributing the hash buckets over a set of machines which leads to a very high network efficiency. We prove that, compared to a straightforward distributed implementation of LSH or Entropy LSH, our Layered LSH method results in an exponential improvement in the network load (from polynomial in $n$, the number of data points, to sub-logarithmic in $n$), while maintaining a good load balance between the different machines. Our experiments also verify that our scheme results in large network traffic improvement that in turn results in significant runtime speedups. 

In the rest of this section, we first provide some background on the similarity search problem and the relevant methods, then discuss LSH in the distributed computation model, and finally present an overview of our scheme as well as our results.

\subsection{Background}
\label{sec:bckgrnd}

In this section, we briefly review the similarity search problem, the basic LSH and Entropy LSH approaches to solving it, the distributed computation framework and its instantiations such as MapReduce and Active DHT, and a straightforward implementation of LSH in the distributed setting as well as its major drawback.

\noindent \textbf{Similarity Search:} The similarity search problem is that of finding data objects similar to a query object. In many practical applications, the objects are represented by multidimensional feature vectors, and hence the problem reduces to finding objects close to the query object under the feature space distance metric. The goal in all these problems is to construct an index, which given the query point, allows for quickly finding the search results. The index construction and the query search both need to be space, time, and network efficient.

\noindent \textbf{Basic LSH:} A method to solve the similarity search problem over high dimensional large datasets is based on a specific type of hash functions, namely Locality Sensitive Hash (LSH) functions, proposed by Indyk and Motwani \cite{im98}. An LSH function maps the points in the feature space to a number of buckets in a way that similar points map to the same buckets with a high chance. Then, a similarity search query can be answered by first hashing the query point and then finding the close data points in the same bucket as the one the query is mapped to. To guarantee both a good search quality and a good search efficiency, one needs to use multiple LSH functions and combine their results. Then, although this approach yields a significant improvement in the running time over both the brute force linear scan and the space partitioning approaches \cite{gim99, Bentley:kdtree, Kakade:covertrees, Krauthgamer:lee:navigating:nets, Rtrees, SRtrees}, unfortunately the required number of hash functions is usually large \cite{Buhler, gim99}, and since each hash table has the same size as the dataset, for large scale applications, this entails a very large space requirement for the index. Also, in the distributed setting, since each hash table lookup at query time corresponds to a network call, this entails a large network load which is also undesirable.

\noindent \textbf{Entropy LSH:} To mitigate the space inefficiency of LSH, Panigrahy \cite{P06} introduced the Entropy LSH scheme. This scheme uses the same hash functions and indexing method as the basic LSH scheme. However, it uses a different query time procedure: In addition to hashing the query point, it hashes a number of query offsets as well and also looks up the hash buckets that any of these offsets map to. The idea is that the close data points are very likely to be mapped to either the same bucket as the query point or to the same bucket as one of the query offsets. This significantly reduces the number of hash tables required to guarantee the search quality and efficiency. Hence, this scheme significantly improves the index space requirement compared to the basic LSH method. However, it unfortunately does not help with the query network efficiency, as each query offset requires a network call. Indeed, since one can see that \cite{im98, P06, Charikar:multiprobe} the number of query offsets required by Entropy LSH is larger than the number of hash tables required by basic LSH, the query network efficiency of Entropy LSH is even worse than that of the basic LSH.

In this paper, we focus on the network efficiency of LSH in distributed frameworks. Two main instantiations of such frameworks are the batched processing system MapReduce \cite{mapreduce} (with its open source implementation Apache Hadoop \cite{Hadoop}), and the real-time processing system denoted as Active Distributed Hash Table (Active DHT), such as Twitter Storm \cite{Storm}. The common feature in all these systems is that they process data in the form of (Key, Value) pairs, distributed over a set of machines. This distributed (Key, Value) abstraction is all we need for both our scheme and analyses to apply. However, to make the later discussions more concrete, here we briefly overview the mentioned distributed systems.

\noindent \textbf{MapReduce:} MapReduce \cite{mapreduce} is a simple model for batched distributed processing using a number of commodity machines, where computations are done in three phases. %By automatically handling the lower level issues such as fault tolerance, it provides a simple computational abstraction, where computations are done in three phases.
The Map phase reads a collection of (Key, Value) pairs from an input source, and by invoking a user defined Mapper function on each input element independently and in parallel, emits zero or more (Key, Value) pairs associated with that input element. The Shuffle phase then groups together all the Mapper-emitted (Key, Value) pairs sharing the same Key, and outputs each distinct group to the next phase. The Reduce phase invokes a user-defined Reducer function on each distinct group, independently and in parallel, and emits zero or more values to associate with the group's Key. The emitted (Key, Value) pairs can then be written on the disk or be the input of the Map phase in a following iteration.

\noindent \textbf{Active DHT:} A DHT (Distributed Hash Table) is a distributed (Key, Value) store which allows Lookups, Inserts, and Deletes on the basis of the Key. The term {\em Active} refers to the fact that an arbitrary User Defined Function (UDF) can be executed on a (Key, Value) pair in addition to Insert, Delete, and Lookup. Twitter's Storm~\cite{Storm} is an example of Active DHT that is gaining widespread use. The Active DHT model is broad enough to act as a distributed stream processing system and as a continuous version of MapReduce \cite{OnePassMR}. All the (Key, Value) pairs in a node of the active DHT are usually stored in main memory to allow for fast real-time processing of data and queries. 

In addition to the typical performance measures of total running time and total space, two other measures are very important for both MapReduce and Active DHTs. First, the total network traffic generated, that is the shuffle size for MapReduce and the number of network calls for Active DHT, and second, the maximum number of values with the same key; a high value here can lead to the ``curse of the last reducer'' in MapReduce \cite{curselast} or to one compute node becoming a bottleneck in Active DHT. 

%\begin{enumerate}
%\item total network traffic generated, that is the shuffle size for MapReduce and the number of network calls for Active DHT
%\item the maximum number of values with the same key; a high value here can lead to the ``curse of the last reducer'' in MapReduce \cite{curselast} or to one compute node becoming a bottleneck in Active DHT.
%\end{enumerate}

Next, we will briefly discuss a simple implementation of LSH in distributed frameworks. 

\noindent \textbf{A Simple Distributed LSH Implementation:} \\ 
Each hash table associates a (Key, Value) pair to each data point, where the Key is the point's hash bucket, and the Value is the point itself. These (Key, Value) pairs are randomly distributed over the set of machines such that all the pairs with the same Key are on the same machine. This is done implicitly using a random hash function of the Key. For each query, first a number of (Key, Value) pairs corresponding to the query point are generated. The Value in all of these pairs is the query point itself. For basic LSH, per hash table, the Key is the hash bucket the query maps to, and for Entropy LSH, per query offset, the Key is the hash bucket the offset maps to. Then, each of these (Key, Value) pairs gets sent to and processed by the machine responsible for its Key. This machine contains all data points mapping to the same query or offset hash bucket. Then, it can perform a search within the data points which also map to the same Key and report the close points. This search can be done using the UDF in Active DHT or the Reducer in MapReduce.

In the above implementation, the amount of network communication per query is directly proportional to the number of hash buckets that need to be checked. However, as mentioned earlier, this number is large for both basic LSH and Entropy LSH. Hence, in large scale applications, where either there is a huge batch of queries or the queries arrive in real-time at very high rates, this will require a lot of communication, which not only depletes the valuable network resources in a shared environment, but also significantly slows down the query search process. In this paper, we propose an alternative way, called Layered LSH, to implement the Entropy LSH scheme in a distributed framework and prove that it exponentially reduces the network load compared to the above implementation, while maintaining a good load balance between different machines.

\subsection{Overview of Our Scheme}
\label{sec:overview}

At its core, Layered LSH is a carefully designed implementation of Entropy LSH in the distributed (Key, Value) model. The main idea is to distribute the hash buckets such that near points are likely to be on the same machine (hence network efficiency) while far points are likely to be on different machines (hence load balance). 

This is achieved by rehashing the buckets to which the data points and the offsets of query points map to, via an additional layer of LSH, and then using the hashed buckets as Keys. More specifically, each data point is associated with a (Key, Value) pair where Key is the mapped value of LSH bucket containing the point, and Value is the point's hash bucket concatenated with the point itself. Also, each query point is associated with multiple (Key, Value) pairs where Value is the query itself and Keys are the mapped values of the buckets which need to be searched in order to answer this query. 

Use of an LSH to rehash the buckets not only allows using the proximity of query offsets to bound the number of (Key, Value) pairs for each query (thus guaranteeing network efficiency), but also ensures that far points are unlikely to be hashed to the same machine (thus maintaining load balance).

\subsection{Our Results}
\label{sec:results}

Here, we present a summary of our results in this paper:

\begin{enumerate}
%\item We design a new scheme, called Layered LSH, to implement Entropy LSH in the distributed (Key, Value) model.%, and provide its pseudo-code for the two major distributed frameworks MapReduce and Active DHT. 
\item We prove that Layered LSH incurs only $O(\sqrt{\log n})$ network cost per query. This is an exponential improvement over the $O(n^{\Theta(1)})$ query network cost of the simple distributed implementation of both Entropy LSH and basic LSH.  
\item Surprisingly, we prove that, the network efficiency of Layered LSH is independent of the search quality. This is in sharp contrast with both Entropy LSH and basic LSH in which increasing search quality directly increases the network cost. This offers a very large improvement in both network efficiency and hence overall run time in settings which require similarity search with high accuracy. We also present experiments which verify this observation on the MapReduce framework.
\item We prove that despite network efficiency (which requires collocating near points on the same machines), Layered LSH sends points which are only $\Omega(1)$ apart to different machines with high likelihood. This shows Layered LSH hits the right tradeoff between network efficiency and load balance across machines.
\item We present experimental results with Layered LSH on Hadoop, which show it also works very well in practice.

\end{enumerate}

The organization of this paper is as follows. 
In section \ref{sec:prelim}, we study the Basic and Entropy LSH indexing methods. In section \ref{sec:distlsh}, we give the detailed description of Layered LSH, including its pseudocode for the MapReduce and Active DHT frameworks, and also provide the theoretical analysis of its network cost and load balance. We present the results of our experiments on Hadoop in section \ref{sec:exp}, study the related work in section \ref{sec:rel}, and conclude in section \ref{sec:conc}. 

\section{Preliminaries}
\label{sec:prelim}

In section \ref{sec:bckgrnd}, we provided the high-level background needed for this paper. Here, we present the necessary preliminaries in further detail. Specifically, we formally define the similarity search problem, the notion of LSH functions, the basic LSH indexing, and Entropy LSH indexing.

\noindent \textbf{Similarity Search:} %As mentioned in section \ref{sec:bckgrnd}, similarity search often reduces to an instantiation of the NN search problem. Here, we present the formal definition of this latter problem. Let $T$ be the domain of all data and query objects, and $\zeta$ to be the distance metric between the objects. Given an approximation ratio $c>1$, the $c$-ANN problem is that of constructing an index that given any query point $q\in T$, allows for quickly finding a data point $p\in T$ whose distance to $q$, as measured by $\zeta$, is at most $c$ times larger than the distance from $q$ to its nearest data point. One can see that \cite{HP01, im98} this problem can be solved by reducing it to the $(c,r)$-NN problem, in which the goal is to return a data point within distance $cr$ of the query point $q$, given that there exists a data point within distance $r$ of $q$.
As mentioned in section \ref{sec:bckgrnd}, similarity search in a metric space with domain $T$ reduces to the problem more commonly known as the $(c,r)$-NN problem, where given an approximation ratio $c>1$, the goal is to construct an index that given any query point $q\in T$ within distance $r$ of a data point, allows for quickly finding a data point $p\in T$ whose distance to $q$ is at most $cr$.

\noindent \textbf{Basic LSH:} To solve the $(c,r)$-NN problem, Indyk and Motwani \cite{im98} introduced the following notion of LSH functions:

\newcommand{\norm}[1]{{\left\Vert#1\right\Vert}_2}
\newcommand{\probH}[1]{{\bf\mbox{\bf Pr}}_{\mathcal{H}}\left[#1\right]}
\newcommand{\etal}{{\em et al. }}
\begin{defn}
\label{def:LSH}
For the space $T$ with metric $\zeta$,  given distance threshold $r$, approximation ratio $c>1$, and probabilities $p_1>p_2$, a family of hash functions $\mathcal{H} = \{h: T \to U\}$ is said to be a $(r,cr,p_1,p_2)$-LSH family if for all $x,y \in T$, 
\begin{equation}\begin{array}{l}
\text{if } \zeta(x,y) \leq r \text{ then } \probH{h(x)=h(y)} \geq p_1, \\ 
\text{if } \zeta(x,y) \geq cr \text{ then } \probH{h(x)=h(y)} \leq p_2. \\
\end{array}\end{equation}
\end{defn}

Hash functions drawn from $\mathcal{H}$ have the property that near points (with distance at most $r$) have a high likelihood (at least $p_1$) of being hashed to the same value, while far away points (with distance at least $cr$) are less likely (probability at most $p_2$) to be hashed to the same value; hence the name locality sensitive. 

LSH families can be used to design an index for the $(c,r)$-NN problem as follows. First, for an integer $k$, let $\mathcal{H'}= \{H: T \to U^k\}$ be a family of hash functions in which any $H\in \mathcal{H'}$ is the concatenation of $k$ functions in $\mathcal{H}$, i.e., $H=(h_1, h_2, \ldots, h_k)$, where $h_i\in \mathcal{H}$ ($1\leq i\leq k$). Then, for an integer $M$, draw $M$ hash functions from $\mathcal{H'}$, independently and uniformly at random, and use them to construct the index consisting of $M$ hash tables on the data points. With this index, given a query $q$, the similarity search is done by first generating the set of all data points mapping to the same bucket as $q$ in at least one hash table, and then finding the closest point to $q$ among those data points. The idea is that a function drawn from $\mathcal{H'}$ has a very small chance ($p_2^k$) to map far away points to the same bucket (hence search efficiency), but since it also makes it less likely ($p_1^k$) for a near point to map to the same bucket, we use a number, $M$, of hash tables to guarantee retrieving the near points with a good chance (hence search quality).

%Indyk and Motwani \cite{im98} proved the following theorem: 

%\begin{thm} With $n$ data points, having an LSH family $\mathcal{H}$ as in definition \ref{def:LSH}, and choosing $k = O(\log{n})$ and $M = O(n^{1/c})$, the LSH indexing scheme above solves the $(c,r)$-NN problem with constant probability.
%\end{thm}

To utilize this indexing scheme, one needs an LSH family $\mathcal{H}$ to start with. Such families are known for a variety of metric spaces, including the Hamming distance, the Earth Mover Distance, and the Jaccard measure \cite{C02}. Furthermore, Datar et al. \cite{DIIM04} proposed LSH families for $l_p$ norms, with $0\leq p\leq 2$, using $p$-stable distributions. For any $W > 0$, they consider a family of hash functions $\mathcal{H}_W:\{h_{{\bf a},b}: \mathbb{R}^d \to \mathbb{Z} \}$ such that 
$$h_{{\bf a},b}(v) = \lfloor \frac{{\bf a} \cdot v+b}{W}\rfloor $$ 
where ${\bf a} \in \mathbb{R}^d$ is a $d$-dimensional vector each of whose entries are chosen independently from a $p$-stable distribution, and $b\in \mathbb{R}$ is chosen uniformly from $[0,W]$. Further improvements have been obtained in various special settings \cite{AI06}. In this paper, we will focus on the most widely used $p$-stable distribution, i.e., the $2$-stable, Gaussian distribution. For this case, Indyk and Motwani \cite{im98} proved the following theorem: 

\begin{thm} With $n$ data points, choosing $k = O(\log{n})$ and $M = O(n^{1/c})$, the LSH indexing scheme above solves the $(c,r)$-NN problem with constant probability.
\end{thm}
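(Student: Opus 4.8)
The plan is to choose the two free parameters $k$ and $M$ so that they separately control the two competing requirements --- search quality (a genuine near neighbor is retrieved) and search efficiency (only few spurious far points are examined) --- and then to show that both succeed simultaneously with constant probability. Throughout, fix a query $q$ that is promised to lie within distance $r$ of some data point $p^*$. With the collision probabilities of Definition~\ref{def:LSH}, a function $H\in\mathcal{H'}$ built from $k$ independent copies of $h\in\mathcal{H}$ satisfies $\Pr[H(x)=H(y)]\ge p_1^k$ when $\zeta(x,y)\le r$ and $\Pr[H(x)=H(y)]\le p_2^k$ when $\zeta(x,y)\ge cr$.

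First I would set $k=\lceil\log_{1/p_2} n\rceil$, so that $p_2^k\le 1/n$ and, since $p_2$ is a constant bounded away from $1$, $k=O(\log n)$. With this choice the expected number of far points (those at distance $>cr$ from $q$) colliding with $q$ in a single table is at most $n\,p_2^k\le 1$. Summed over the $M$ tables, the expected number of far collisions is at most $M$, so by Markov's inequality the event $B$ that at most $3M$ far collisions occur has $\Pr[B]\ge 2/3$.

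Next I would fix $M$ to guarantee search quality. Because $p_1^k=n^{-\rho}$ with $\rho=\ln(1/p_1)/\ln(1/p_2)$, setting $M=\lceil n^{\rho}\rceil$ gives $M\,p_1^k\ge 1$, whence the probability that $p^*$ fails to collide with $q$ in every one of the $M$ tables is at most $(1-p_1^k)^M\le(1-1/M)^M\le 1/e$; call the complementary event $A$, so $\Pr[A]\ge 1-1/e$. Running the search with the standard interruption after $3M+1$ examined points, a pigeonhole argument then finishes the proof: under $B$ at most $3M$ of the examined points are far, so among any $3M+1$ examined points at least one lies within $cr$ of $q$, while under $A$ the point $p^*$ (itself within $r\le cr$) belongs to the candidate set and is examined whenever the set has fewer than $3M+1$ elements. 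In either case a point within distance $cr$ is reported, and a union bound gives $\Pr[A\cap B]\ge 1-1/e-1/3>0$, a constant.

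The one step genuinely specific to the stated theorem, and the step I expect to be the main obstacle, is translating $M=O(n^{\rho})$ into the claimed $M=O(n^{1/c})$; this requires showing $\rho=\ln(1/p_1)/\ln(1/p_2)\le 1/c$ for the Gaussian ($2$-stable) family $\mathcal{H}_W$. This is not a counting argument but an analytic estimate of the collision probability $p(u)=\Pr[h_{\mathbf{a},b}(x)=h_{\mathbf{a},b}(y)]$ as a function of $u=\norm{x-y}$, obtained by integrating the Gaussian density against the window of width $W$; one must verify that, for an appropriate choice of $W$, the ratio of $\ln(1/p(r))$ to $\ln(1/p(cr))$ is at most $1/c$. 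Everything else --- the choice of $k$, the Markov bound on far collisions, and the union bound over $A$ and $B$ --- is routine once this family-specific inequality is in hand.
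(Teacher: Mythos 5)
The paper does not prove this theorem at all --- it is quoted from Indyk and Motwani \cite{im98} (see also Gionis et al.\ \cite{gim99}) and used as a black box, so there is no in-paper argument to compare against. Your proposal reproduces the standard proof from that source correctly: the amplification to $p_1^k$ and $p_2^k$, the choice $k=\lceil\log_{1/p_2}n\rceil$ to make the expected number of far collisions per table at most $1$, the Markov bound on event $B$, the $(1-1/M)^M\le 1/e$ bound on event $A$ with $M=\lceil n^{\rho}\rceil$, and the interrupt-after-$3M+1$ pigeonhole are all exactly the classical argument, and the union bound $1-1/e-1/3>0$ is valid without any independence assumption. You are also right that the only non-generic step is showing $\rho=\ln(1/p_1)/\ln(1/p_2)\le 1/c$; for the Gaussian family $\mathcal{H}_W$ this is the collision-probability computation of Datar et al.\ \cite{DIIM04} (the same integral that appears in the paper's Lemma~\ref{lem:GueqGv}), which yields $\rho\le 1/c$ for a suitable constant $W$, so flagging it rather than proving it is an accurate assessment of where the real work lies rather than a gap in the generic machinery.
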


Although Basic LSH yields a significant improvement in the running time over both the brute force linear scan and the space partitioning approaches \cite{expt:space:partitioning, Kakade:covertrees, Krauthgamer:lee:navigating:nets}, unfortunately the required number of hash functions is usually large \cite{Buhler, gim99}, which entails a very large space requirement for the index. Also, in the distributed setting, each hash table lookup at query time corresponds to a network call which entails a large network load.

\noindent \textbf{Entropy LSH: } To mitigate the space inefficiency, Panigrahy \cite{P06} introduced the Entropy LSH scheme. This scheme uses the same indexing as in the basic LSH scheme, but a different query search procedure. The idea here is that for each hash function $H\in \mathcal{H'}$, the data points close to the query point $q$ are highly likely to hash either to the same value as $H(q)$ or to a value very close to that. Hence, it makes sense to also consider as candidates the points mapping to close hash values. To do so, in this scheme, in addition to $q$, several ``offsets" $q+\delta_i$ ($1\leq i\leq L$), chosen randomly from the surface of $B(q,r)$, the sphere of radius $r$ centered at $q$, are also hashed and the data points in their hash buckets are also considered as search result candidates. It is conceivable that this may reduce the number of required hash tables, and in fact, Panigrahy \cite{P06} shows that with this scheme one can use as few as $\tilde{O}(1)$ hash tables. The instantiation of his result for the $l_2$ norm is as follows:

\begin{thm}
\label{thm:ELSH}
For $n$ data points, choosing $k\geq \frac{\log n}{\log (1/p_2)}$ (with $p_2$ as in Definition \ref{def:LSH}) and $L = O(n^{2/c})$, as few as $\tilde{O}(1)$ hash tables suffice to solve the $(c,r)$-NN problem.
\end{thm}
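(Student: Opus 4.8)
The plan is to establish two properties and then combine them: \emph{correctness}, that whenever a data point $p$ lies within distance $r$ of the query $q$, at least one of the offsets $q+\delta_i$ lands in the same bucket as $p$ in at least one of the $\tilde O(1)$ hash tables; and \emph{efficiency}, that the total number of ``far'' data points (those at distance $\geq cr$ from $q$) that share a bucket with $q$ or with some offset is small, so that reporting the closest examined point actually solves the $(c,r)$-NN problem in sub-linear work. The central technical device throughout is the $2$-stability of the Gaussian distribution underlying $\mathcal{H}_W$: for any $x,y$ the projection $\mathbf a\cdot(x-y)$ is distributed as $\|x-y\|_2$ times a standard Gaussian, so the $k$-coordinate bucket analysis decouples into $k$ independent one-dimensional problems. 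In this picture an offset $q+\delta$ perturbs the projection of $q$ by a Gaussian of scale $r$ (since $\|\delta\|_2=r$), while a near point perturbs it by a Gaussian of scale at most $r$.

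For correctness I would first fix a single hash table and lower-bound the probability $\alpha=\Pr_{\delta}[H(q+\delta)=H(p)]$ that one random offset covers the near point's bucket. By $2$-stability, $\alpha$ factors over the $k$ coordinates into a product of one-dimensional probabilities that a Gaussian offset of scale $r$ falls into the same width-$W$ bin as the (Gaussian, scale $\leq r$) projection of $p$. Bounding each factor and taking the product over the $k=O(\log n/\log(1/p_2))$ coordinates yields $\alpha=\Omega(n^{-2/c})$, the $l_2$ instantiation of Panigrahy's entropy bound; the exponent $2/c$ (larger than the $1/c$ of basic LSH) is the price paid for trading hash tables for query offsets, and arises because both the offset's displacement and the near point's displacement, each of scale $r$, enter the collision-probability exponent. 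Hence $L=O(n^{2/c})$ independent offsets cover $p$ with constant probability within one table, and running $\tilde O(1)$ independent tables boosts the overall success probability (now also over the random hash directions $\mathbf a$) to a constant.

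For efficiency I would use $k\geq \log n/\log(1/p_2)$, which forces $p_2^k=p(cr)^k\leq 1/n$, so that any fixed far point collides with a fixed bucket with probability at most $1/n$. The subtlety is that a far point $p'$ at distance $\geq cr$ from $q$ is only at distance $\geq (c-1)r$ from an offset, so a worst-case per-offset bound would use $p((c-1)r)^k>1/n$ and, multiplied by the $L=n^{2/c}$ offsets, would predict far more than $n$ candidates, worse than a linear scan. The resolution is to average over the random offset directions: in high dimension a random $\delta$ is essentially orthogonal to $q-p'$, so the typical offset-to-$p'$ distance is $\approx cr$ rather than $(c-1)r$, and the rare offsets pointing toward $p'$ contribute negligibly. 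Carrying out this averaging via $2$-stability, and using that the offset buckets concentrate within $O(r)$ of $H(q)$, bounds the expected number of \emph{distinct} far candidates by $\tilde O(1)$, so the reported nearest examined point is a valid $cr$-approximate neighbor.

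The main obstacle is making this efficiency argument rigorous while simultaneously respecting the correctness requirement, since the two pull in opposite directions: coverage of near points wants the offset buckets to spread out enough to reach $p$, whereas keeping few far candidates wants them to stay tightly clustered around $H(q)$. The heart of the proof is a concentration-of-measure estimate for the sphere offsets projected through the Gaussian hash, showing that the very same $k$ and $L$ achieve both goals at once. Controlling the heavy overlap of the $L$ strongly correlated offset buckets, rather than union-bounding them as $L$ independent lookups, is the crux and the most delicate part of the analysis.
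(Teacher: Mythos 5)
First, note that the paper does not prove this theorem: it is quoted as the $l_2$ instantiation of Panigrahy's result and attributed to \cite{P06}, so there is no in-paper argument to compare against. Judged on its own, your sketch has the right overall architecture (a coverage bound for near points plus a candidate bound for far points, both via $2$-stability), but the coverage half contains a genuine gap. The quantity your per-coordinate product actually controls is the collision probability averaged over the choice of $H$, i.e.\ $\mathbf{E}_H\bigl[\Pr_\delta[H(q+\delta)=H(p)\mid H]\bigr]=\Omega(n^{-2/c})$. That first-moment bound does not imply that $L=O(n^{2/c})$ offsets succeed with constant probability: if the conditional probability $X=\Pr_\delta[H(q+\delta)=H(p)\mid H]$ were, say, equal to $1$ on an event of probability $n^{-2/c}$ and $0$ otherwise, the average would be unchanged yet $L$ offsets would almost always fail. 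What is needed is that $X=\Omega(n^{-2/c})$ for a constant fraction of hash functions $H$. Writing $X\approx\prod_{i=1}^k Y_i$ with $Y_i$ the per-coordinate conditional collision probabilities, this amounts to showing that $\sum_{i=1}^k\log(1/Y_i)\le (2/c)\log n+O(\log\log n)$ with constant probability --- a concentration statement for a sum of $k=\Theta(\log n)$ i.i.d.\ terms around $k\,\mathbf{E}[\log(1/Y_1)]$. That expectation is precisely a per-coordinate entropy, which is why Panigrahy's proof is an entropy/typical-set argument rather than the product-of-expectations computation you describe. Your sketch never supplies this step, and it is the actual crux.

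By contrast, the part you single out as the most delicate --- controlling the number of far candidates met by the $L$ correlated offsets --- matters only for the running time, not for whether the returned point solves the $(c,r)$-NN problem: the search procedure explicitly discards any candidate farther than $cr$, so correctness of the output needs only the coverage bound, and the condition $k\ge\log n/\log(1/p_2)$ (hence $p_2^k\le 1/n$) is there to keep the work per bucket small. Your orthogonality heuristic for why an offset at distance $r$ from $q$ typically sits at distance about $\sqrt{c^2+1}\,r$ rather than $(c-1)r$ from a far point is reasonable as far as it goes, but the effort should be reallocated from there to the concentration step above, which is what the proof actually requires.
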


Hence, this scheme in fact significantly reduces the number of required hash tables (from $O(n^{1/c})$ for basic LSH to $\tilde{O}(1)$), and hence the space efficiency of LSH. However, in the distributed setting, it does not help with reducing the network load of LSH queries. Actually, since for the basic LSH, one needs to look up $M=O(n^{1/c})$ buckets but with this scheme, one needs to look up $L=O(n^{2/c})$ offsets, it makes the network inefficiency issue even more severe.

\comment{
Consider a set of points $S$ in the $d$-dimensional Euclidean space $\mathbb{R}^d$. First we define the $c$-Approximate Nearest Neighbor Problem
\begin{defn}{ $c$-Approximate Nearest Neighbor Problem or the $c$-ANN Problem: } \\
Given a data set $S$ consisting of $n$ points in $\mathbb{R}^d$, construct a data structure which, given a query $q$, outputs a data point whose distance to $q$ is at most $c$ times the distance from $q$ to its nearest neighbor. 
\end{defn} 

In this paper, we refer to a point $s \in S$ such that $\norm{s - q} \leq \lambda$ as being a $\lambda$-neighbor of $q$ for any $\lambda > 0$.
The $c$-ANNS problem can be solved by reducing it to its decision version, the $(c,R)$-Near Neighbor problem. This is defined as,

\begin{defn}{ $(c,R)$-Near Neighbor Problem or the $(c,R)$-NN Problem: } \\
 Given a data set $S$ consisting of $n$ points in a metric space, construct a data structure that for any query $q$, if $\exists s_i \in S$ which is a $R$-neighbor of $q$, output a point $s_j \in S$ which is a $cR$-neighbor of $q$. 
\end{defn} 
It has been shown that this reduction to the $(c,R)$-NN problem adds only a logarithmic factor in running time and size of the data structure \cite{HP01, im98}. 

In order to solve the $(c,R)$-NN problem, Indyk and Motwani introduced locality sensitive hash (LSH) functions.

\begin{defn}
A family of hash functions $\mathcal{H} = \{h: \mathbb{R}^d \to U\}$ is said to be $(R,cR,p_1,p_2)$-sensitive if $\forall x,y \in \mathbb{R}^d$, 
\begin{equation}\begin{array}{l}
\text{if } \norm{x-y} \leq R \text{ then } \probH{h(x)=h(y)} \geq p_1, \\ 
\text{if } \norm{x-y} \geq cR \text{ then } \probH{h(x)=h(y)} \leq p_2. \\
\end{array}\end{equation}
\end{defn}

Functions drawn $\mathcal{H}$ have the property that near points (distance at most $R$) have a high likelihood of being hashed to the same value (at least $p_1$) and far away points (distance at least $cR$) are less unlikely to be hashed to the same value (probability at most $p_2$) \cite{im98,KOR98}. 

Datar \etal show LSH function family based on the use of $2$-stable normal distribution $\mathcal{N}(0,1)$ (the notation $\mathcal{N}(\mu,\sigma)$ denotes a normal distribution with mean $\mu$ and variance $\sigma^2$). For any $W > 0$, they consider a family of hash functions $\mathcal{H}_W:\{h_{{\bf a},b}: \mathbb{R}^d \to \mathbb{Z} \}$ such that $h_{{\bf a},b}({\bf x}) = \lfloor \frac{{\bf a} \cdot {\bf x}+b}{W}\rfloor $ which consists of hash functions $h_{{\bf a},b}$ indexed by a random choice of ${\bf a}$ and $b$,with ${\bf a} \in \mathbb{R}^d$ and individual components $a_i$ of ${\bf a}$, $i \in 1 \ldots d$ chosen from $\mathcal{N}(0,1)$ and $b$ chosen uniformly from $(0,W)$. It can be shown that $\mathcal{H_W}$ is $(R,cR,p_1,p_2)$ sensitive with $p_2 < p_1$ \cite{DIIM04}. Apart from the Euclidean space, LSH function families are known to exist for other metric spaces like the Earth Mover Distance, Hamming metric and more generally, $p$-normed space for $p \in (0,2]$, \cite{C02, DIIM04}.

The $(c,R)$-NN problem, can be solved using a LSH family $\mathcal{H}$ by the following indexing mechanism. For a positive integers $k$ to be defined later, we define a family of hash functions $\mathcal{H'}: \mathbb{R}^d \to U^k$ and for $H \in \mathcal{H'}$, $H({\bf x}) = (h_1({\bf x}), h_2({\bf x}), \ldots h_k({\bf x}))$ where $h_i \in \mathcal{H}$ for $1 \leq i \leq k$.  Next, 
we choose $L$ hash functions $H_1, \ldots H_L$ from $\mathcal{H'}$, where $L$ is also a integer which will be defined later, and construct $L$ hash tables using the functions $H_1, \ldots H_L$ by adding ${\bf x}$ to bucket $H_i({\bf x})$ in $i^{\text{th}}$ hash table, $\forall x \in S$ in the preprocessing step. The query processing step consists of search the buckets to which the query $q$ is  mapped to in each of the $L$ hash tables for any $cR$-neighbors to $q$. Indyk and Motwani showed that choosing $k = O(\log{n})$ and $L = O(n^{1/c})$ solves the $(c,R)$-NN problem with constant probability.

Although this approach yields a significant improvement in the running time over space partitioning approaches \cite{gim99}, the improvement comes at the cost of increased index size since $O(n^{1/c})$ replicas of each point need to be stored. For applications on large or web-scale data sets, this can be impractical \cite{Charikar:multiprobe, P06}. In order to reduce the space requirement, Panigrahy proposed a novel method ``Entropy LSH" which considers several randomly chosen ``offsets" in the neighborhood of the query point $q$, and searches for $cR$-neighbors of $q$ in the buckets to which these randomly chosen offsets are mapped to \cite{P06}. It can be shown that if $L' = O(n^{2/c})$ offsets are chosen in the neighborhood of $q$, a single hash table suffices to solve the $(c,R)$-NN problem \cite{P06}.

}

\section{Distributed LSH}
\label{sec:distlsh}

In this section, we will present the Layered LSH scheme and theoretically analyze it. We will focus on the $d$-dimensional Euclidian space under $l_2$ norm. As notation, we will let $S$ to be a set of $n$ data points available a-priori, and $Q$ to be the set of query points, either given as a batch (in case of MapReduce) or arriving in real-time (in case of Active DHT). Parameters $k, L, W$ and LSH families $\mathcal{H}=\mathcal{H}_W$ and $\mathcal{H'}=\mathcal{H'}_W$ will be as defined in section \ref{sec:prelim}. Since multiple hash tables can be obviously implemented in parallel, for the sake of clarity we will focus on a single hash table and use a randomly chosen hash function $H\in \mathcal{H'}$ as our LSH function throughout the section. 

%As mentioned in section \ref{sec:bckgrnd}, we will consider the distributed (Key, Value) based systems, and as concrete instantiations focus on the batched processing model MapReduce and the real-time processing model Active DHT. 

In (Key, Value) based distributed systems, a hash function from the domain of all Keys to the domain of available machines is implicitly used to determine the machine responsible for each (Key, Value) pair. In this section, for the sake of clarity, we will assume this mapping to be simply identity. That is, the machine responsible for a (Key, Value) data element is simply the machine with id equal to Key.

At the core, Layered LSH is a carefully distributed implementation of Entropy LSH. Hence before presenting it, first we further detail the simple distributed implementation of Entropy LSH, described in section \ref{sec:bckgrnd}, and explain its major drawback. For any data point $p\in S$ a (Key, Value) pair $(H(p),p)$ is generated and sent to machine $H(p)$. For each query point $q$, after generating the offsets $q+\delta_i$ ($1\leq i\leq L$), for each unique value $x$ in the set \[\{H(q+\delta_i)| 1\leq i \leq L)\},\] a (Key, Value) pair $(x,q)$ is generated and sent to machine $x$. Hence, machine $x$ will have all the data points $p\in S$ with $H(p)=x$ as well as all query points $q\in Q$ such that $H(q+\delta_i)=x$ for some $1\leq i\leq L$. Then, for any received query point $q$, this machine retrieves all data points $p$ with $H(p)=x$ which are within distance $cr$ of q, if any such data points exist. This is done via a UDF in Active DHT or the Reducer in MapReduce, as presented in Figure \ref{distribution:H} for the sake of concreteness of exposition.  

In this implementation, the network load due to data points is not very significant. Not only just one (Key, Value) pair per data point is transmitted over the network, but also in many real-time applications, data indexing is done offline when efficiency and speed are not as critical. However, the amount of data transmitted per query in this implementation is $O(Ld)$: $L$ (Key, Value) pairs, one per offset, each with the $d$-dimensional point $q$ as Value. Both $L$ and $d$ are large in many practical applications with high-dimensional data (e.g., $L$ can be in the hundreds, and $d$ in the tens or hundreds). Hence, this implementation needs a lot of network communication per query, and with a large batch of queries or with queries arriving in real-time at very high rates, this will not only put a lot of strain on the valuable and usually shared network resources but also significantly slow down the search process.

Therefore, a distributed LSH scheme with significantly better query network efficiency is needed. This is where Layered LSH comes into the picture.

%In this paper we focus our attention of distributed frameworks based on (Key, Value) abstraction. Two important instantiations of such frameworks are MapReduce (with its open source implementation Hadoop \cite{Hadoop}), which is widely used in batch processing in high throughput high latency applications, and Active Distributed Hash Tables like Yahoo!'s S4 \cite{S4} and Twitter's Storm \cite{Storm} which are used for real-time processing in low latency applications.  Automatic handling of low level issues like fault tolerance and ease of programming has led to widespread use of distributed frameworks based on the (Key, Value) abstraction relative to traditional message passing based systems. 

%\subsection{Distributed Entropy LSH}
%In this section we explore distributed implementations of {\bf Entropy LSH} both MapReduce and Active DHT frameworks. For the sake of clarity, in this section we consider a single hash table implementation of {\bf Entropy LSH} [and mention how our methods can be extended to indexing schemes with multiple hash tables]. Consider a data set $S$ and a query set $Q$ consisting of points in $\mathbb{R}^d$. Let parameters $k,L,W$ be chosen as defined above in Section 2. Let $H$ denote random choice of a hash function from the LSH family $\mathcal{H}_W$. Also, let $q_1 \ldots q_L$ denote randomly chosen offsets in surface of $B(q,R)$ for each $q \in Q$. [{\bf This subsection needs to be synchronized with notation in Section 2}].

\begin{figure*}[ttt]

 \begin{minipage}[t]{0.45\textwidth}
 \begin{algorithm}[H]
\caption{MapReduce Implementation} 

\begin{algorithmic}
\STATE {\bf Map:}
\STATE{{\bf Input:} Data set $S$, query set $Q$}
\STATE{Choose $H$ from $\mathcal{H'}_W$ uniformly at random, but consistently across Mappers}
\FOR{each data point $p\in S$} 
\STATE Emit $(H(p), p)$
\ENDFOR
\FOR{each query point $q\in Q$} 
\FOR{$1\leq i\leq L$}
\STATE Choose the offset $q+\delta_i$ from the surface of $B(q,r)$
\STATE Emit $(H(q+\delta_i), q)$
\ENDFOR
\ENDFOR
\STATE{}
\STATE {\bf Reduce:}
\STATE{{\bf Input}: For a hash bucket $x$, all data points $p\in S$ with $H(p)=x$, and all query points $q\in Q$ one of whose offsets hashes to $x$. }
\FOR{each query point $q$ among the input points} 
\FOR{each data point $p$ among the input points}
\IF {$p$ is within distance $cr$ of $q$}
\STATE Emit $(q,p)$
\ENDIF
\ENDFOR
\ENDFOR
\end{algorithmic}

\end{algorithm}
\end{minipage}
\hspace{1cm}
\begin{minipage}[t]{0.45\textwidth}
\begin{algorithm}[H]
\caption{Active DHT Implementation} 
\label{H:ADHT} 
\begin{algorithmic}
\STATE {\bf Preprocessing:}
\STATE{{\bf Input:} Data set $S$}
\FOR{each data point $p\in S$}
\STATE{Compute the hash bucket $y=H(p)$}
\STATE{Send the pair $(y, p)$ to machine with id $y$}
\STATE{At machine $y$ add $p$ to the in-memory bucket $y$} 
\ENDFOR
\STATE{}
\STATE {\bf Query Time:}
\STATE{{\bf Input:} Query point $q\in Q$ arriving in real-time} 
\FOR{$1\leq i \leq L$}
\STATE{Generate the offset $q+\delta_i$}
\STATE{Compute the hash bucket $x=H(q+\delta_i)$}
\STATE{Send the pair $(x,q)$ to machine with id $x$}
\STATE{At machine $x$, run $\text{SearchUDF}(x,q)$}
\ENDFOR
\STATE{}
\STATE{\bf $\text{SearchUDF}(x,q)$:}
\FOR{each data point $p$ with $H(p)=x$}
\IF{$p$ is within distance $cr$ of $q$}
\STATE{Emit $(q,p)$}
\ENDIF
\ENDFOR
\end{algorithmic}
\end{algorithm}
\end{minipage}

\caption{Simple Distributed LSH}
\label{distribution:H}
\end{figure*}

%\subsection*{H-scheme}
%We first describe a simple and natural distributed implementation of LSH which we briefly outlined in Section 1. This approach involves associating the pair $(H(s),s)$ with each $s \in S$ and the pairs $(H(q_i),q)$, $i = 1 \ldots L$ for each $q \in Q$. These (Key, Value) pairs are randomly distributed over the network in a manner which guarantees that all pairs with same Key are mapped to the same node (or Reduce task in MapReduce).  Each node (Reduce Task in MapReduce) then performs a local search of data points mapped to it in order to identify $cR$-neighbors to all the query points it received. This search can be implemented via a UDF in Active DHT, the Reducer in MapReduce. We describe this mechanism  compactly in figure(\ref{distribution:H}) and refer to it as the $H$-scheme. Correctness of implementation in both the frameworks is guaranteed by the property that (Key, Value) pairs with the same Key are sent to the same node (or Reduce task). Also, it can be easily seen that per query, $O(L)$ amount of data needs to be shuffled over the network. Appropriately setting the seeds used for random number generation can ensure that all the Map tasks generate the correct hash function $H$ and query offsets $q_i,1\leq i \leq L$ for each query $q \in Q$. 

\subsection{Layered LSH}
%\subsection*{G-scheme} 

In this subsection, we present the Layered LSH scheme. The main idea is to use another layer of locality sensitive hashing to distribute the data and query points over the machines. More specifically, given a parameter value $D>0$, we sample an LSH function $G:\mathbb{R}^k \to \mathbb{Z}$ such that:

\begin{equation}\label{eq:G}
G(v) = \lfloor \frac{\alpha \cdot v + \beta }{D} \rfloor
\end{equation}

where ${\bf \alpha} \in \mathbb{R}^k$ is a $k$-dimensional vector whose individual entries are chosen from the standard Gaussian $\mathcal{N}(0,1)$ distribution, and $\beta\in \mathbb{R}$ is chosen uniformly from $[0,D]$.

Then, denoting $G(H(\cdot))$ by $GH(\cdot)$, for each data point $p\in S$, we generate a (Key, Value) pair $(GH(p), <H(p),p>)$, which gets sent to machine $GH(p)$. By breaking down the Value part to its two pieces, $H(p)$ and $p$, this machine will then add $p$ to the bucket $H(p)$. This can be done by the Reducer in MapReduce, and by a UDF in Active DHT. Similarly, for each query point $q\in Q$, after generating the offsets $q+\delta_i$ ($1\leq i\leq L$), for each unique value $x$ in the set
\begin{equation}\label{eq:Gset}
 \{GH(q+\delta_i)| \, 1\leq i \leq L\}
 \end{equation}
we generate a (Key, Value) pair $(x, q)$ which gets sent to machine $x$. Then, machine $x$ will have all the data points $p$ such that $GH(p)=x$ as well as the queries $q\in Q$ one of whose offsets gets mapped to $x$ by $GH(\cdot)$. Specifically, if for the offset $q+\delta_i$, we have $GH(q+\delta_i)=x$, all the data points $p$ that $H(p)=H(q+\delta_i)$ are also located on machine $x$. Then, this machine regenerates the offsets $q+\delta_i$ ($1\leq i\leq L$), finds their hash buckets $H(q+\delta_i)$, and for any of these buckets such that $GH(q+\delta_i)=x$, it performs a similarity search among the data points in that bucket. Note that since $q$ is sent to this machine, there exists at least one such bucket. Also note that, the offset regeneration, hash, and bucket search can all be done by either a UDF in Active DHT or the Reducer in MapReduce. To make the exposition more concrete, we have presented the pseudo code for both the MapReduce and Active DHT implementations of this scheme in Figure \ref{distribution:G}. 

At an intuitive level, the main idea in Layered LSH is that since $G$ is an LSH, and also for any query point $q$, we have $H(q+\delta_i)\simeq H(q)$ for all offsets $q+\delta_i$ ($1\leq i\leq L$), the set in equation \ref{eq:Gset} has a very small cardinality, which in turn implies a small amount of network communication per query. On the other hand, since $G$ and $H$ are both LSH functions, if two data points $p,p'$ are far apart, $GH(p)$ and $GH(p')$ are highly likely to be different. This means that, while locating the nearby points on the same machines, Layered LSH partitions the faraway data points on different machines, which in turn ensures a good load balance across the machines. Note that this is critical, as without a good load balance, the point in distributing the implementation would be lost.

In the next section, we present the formal analysis of this scheme, and prove that compared to the simple implementation, it provides an exponential improvement in the network traffic, while maintaining a good load balance across the machines.

\begin{figure*}[ht]
\begin{minipage}[t]{0.45\linewidth}
\begin{algorithm}[H]
\caption{MapReduce Implementation} 

\begin{algorithmic}
\STATE {\bf Map:}
\STATE{{\bf Input:} Data set $S$, query set $Q$}
\STATE{Choose hash functions $H,G$ randomly but consistently across mappers}
	\FOR{each data point $p\in S$} 
		\STATE{ Emit $(GH(p), <H(p), p>)$}
	\ENDFOR
	\FOR{each query point $q\in Q$}
		\FOR{$1\leq i\leq L$}
			\STATE{Generate the offset $q+\delta_i$}
			\STATE{Emit $(GH(q+\delta_i),q)$} 
		\ENDFOR		 
	\ENDFOR
\STATE{}  
\STATE{{\bf Reduce:}}
	\STATE{{\bf Input}:  For a hash bucket $x$, all pairs $<H(p),p>$ for data points $p\in S$ with $GH(p)=x$, and all query points $q\in Q$ one of whose offsets is mapped to $x$ by $GH$.}
	\FOR{each data point $p$ among the input points} 
		\STATE{Add $p$ to bucket $H(p)$} 
	\ENDFOR  
	\FOR{each query point $q$ among the input points}
		\FOR{$1\leq i\leq L$}
			\STATE{Generate the offset $q+\delta_i$, and find $H(q+\delta_i)$}
			\IF{$GH(q+\delta_i)=x, H(q+\delta_i)\neq H(q+\delta_j)$ ($\forall j<i$)}
				\FOR{each data point $p$ in bucket $H(q+\delta_i)$}
					\IF{($p$ is within distance $cr$ of $q$)}
						\STATE{Emit $(q,p)$}
					\ENDIF
				\ENDFOR	
			\ENDIF 
		\ENDFOR
	\ENDFOR 
\end{algorithmic}
\end{algorithm}
\end{minipage}
\hspace{1cm}
\begin{minipage}[t]{0.45\linewidth}
\begin{algorithm}[H]
\caption{Active DHT Implementation} 
\label{H:ADHT} 
\begin{algorithmic}
\STATE {\bf Preprocessing:}
\STATE{{\bf Input:} Data set $S$}
\FOR{each data point $p\in S$}
\STATE {Compute the hash bucket $H(p)$ and machine id $y=GH(p)$}
\STATE{Send the pair $(y, <H(p),p>)$ to machine with id $y$}
\STATE{At machine $y$, add $p$ to the in-memory bucket $H(p)$} 
\ENDFOR
\STATE{}
\STATE {\bf Query Time:}
\STATE{{\bf Input:} Query point $q\in Q$ arriving in real-time}
\FOR{$1\leq i\leq L$}
\STATE{Generate the offset $q+\delta_i$, compute $x=GH(q+\delta_i)$}
\IF{$GH(q+\delta_j)\neq x \, (\forall j<i)$}
\STATE{Send the pair $(x,q)$ to machine with id $x$}
\STATE{At machine $x$, run $\text{SearchUDF}(x,q)$}
\ENDIF
%\STATE{Return $NN(GoH(q_i),(H(q_i),q))$, where $NN$: UDF returns $cR$-neighbors of $q$ in bucket $H(q_i)$ at location $GoH(q_i)$}
\ENDFOR
\STATE{}
\STATE{{\bf $\text{SearchUDF}(x,q)$:}}
\FOR{$1\leq i\leq L$}
\STATE{Generate offset $q+\delta_i$, compute $H(q+\delta_i), GH(q+\delta_i)$}
\IF{$GH(q+\delta_i)=x, H(q+\delta_i)\neq H(q+\delta_j) \, (\forall j<i)$}
\FOR {each data point $p$ in bucket $H(q+\delta_i)$}
\IF{$p$ is within distance $cr$ from $q$}
\STATE {Emit $(q,p)$}
\ENDIF
\ENDFOR
\ENDIF
\ENDFOR
\end{algorithmic}
\end{algorithm}
\end{minipage}
\caption{Layered LSH}
\label{distribution:G}
\end{figure*}

\subsection{Analysis}
\label{sec:analysis}

In this section, we analyze the Layered LSH scheme presented in the previous section. We first fix some notation. As mentioned earlier in the paper, we are interested in the $(c,r)$-NN problem. Without loss of generality and to simplify the notation, in this section we assume $r=1/c$. This can be achieved by a simple scaling.
The LSH function $H\in \mathcal{H'}_W$ that we use is $H=(H_1, \ldots, H_k)$, where $k$ is chosen as in Theorem \ref{thm:ELSH} and for each $1\leq i \leq k$:
\[
H_i(v) = \lfloor \frac{a_i\cdot v + b_i}{W} \rfloor
\] 
where $a_i$ is a $d$-dimensional vector each of whose entries is chosen from the standard Gaussian $\mathcal{N}(0,1)$ distribution, and $b_i \in \mathbb{R}$ is chosen uniformly from $[0,W]$. We will also let $\Gamma:\mathbb{R}^d\rightarrow \mathbb{R}^k$ be $\Gamma = (\Gamma_1, \ldots, \Gamma_k)$, where for $1\leq i \leq k$:
\[
\Gamma_i(v) = \frac{a_i\cdot v + b_i}{W} 
\]
hence, $H_i(\cdot) = \lfloor \Gamma_i(\cdot) \rfloor$. We will use the following small lemma in our analysis:
\begin{lemma} \label{lem:trieq} For any two vectors $u,v\in \mathbb{R}^d$, we have:
\[
||\Gamma(u)-\Gamma(v)|| - \sqrt{k} \leq ||H(u)-H(v)|| \leq ||\Gamma(u)-\Gamma(v)|| + \sqrt{k} 
\]
\end{lemma}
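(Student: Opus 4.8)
The plan is to reduce the statement to a single application of the triangle inequality in $\mathbb{R}^k$, exploiting the fact that $H$ is nothing but the coordinatewise floor of $\Gamma$: by construction $H_i(\cdot) = \lfloor \Gamma_i(\cdot)\rfloor$ for each $1 \le i \le k$. The entire content of the lemma is therefore that rounding each coordinate to an integer perturbs the difference vector $\Gamma(u)-\Gamma(v)$ by a vector whose length is controlled by $\sqrt{k}$.

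First I would introduce the error vector $e = \big(H(u)-H(v)\big) - \big(\Gamma(u)-\Gamma(v)\big) \in \mathbb{R}^k$, so that $H(u)-H(v) = \big(\Gamma(u)-\Gamma(v)\big) + e$. Writing $\{x\} = x - \lfloor x\rfloor \in [0,1)$ for the fractional part, the $i$-th coordinate of $e$ is
\[
e_i = \big(\lfloor \Gamma_i(u)\rfloor - \lfloor \Gamma_i(v)\rfloor\big) - \big(\Gamma_i(u) - \Gamma_i(v)\big) = \{\Gamma_i(v)\} - \{\Gamma_i(u)\}.
\]
Since both fractional parts lie in $[0,1)$, each $e_i$ lies in $(-1,1)$, so $|e_i| < 1$ for every $i$.

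Next, summing over the $k$ coordinates gives $\|e\|^2 = \sum_{i=1}^k e_i^2 < k$, hence $\|e\| \le \sqrt{k}$. The lemma then follows immediately from the two forms of the triangle inequality applied to $H(u)-H(v) = (\Gamma(u)-\Gamma(v)) + e$, namely
\[
\|H(u)-H(v)\| \le \|\Gamma(u)-\Gamma(v)\| + \|e\| \le \|\Gamma(u)-\Gamma(v)\| + \sqrt{k},
\]
together with the reverse inequality
\[
\|H(u)-H(v)\| \ge \|\Gamma(u)-\Gamma(v)\| - \|e\| \ge \|\Gamma(u)-\Gamma(v)\| - \sqrt{k}.
\]

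There is no genuinely hard step here; the argument is elementary. The only point that needs a moment of care is the per-coordinate bound $|e_i| < 1$ — i.e., that replacing each true coordinate difference by the difference of the two floors leaves a fractional residual of magnitude strictly below $1$ — after which everything reduces to the standard fact that a vector all of whose coordinates are bounded by $1$ has $\ell_2$ norm at most $\sqrt{k}$, combined with the forward and reverse triangle inequalities.
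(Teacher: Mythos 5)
Your proof is correct and follows essentially the same route as the paper's: the paper sets $R_i=\Gamma_i-H_i$ (your fractional-part residual), bounds $\|R(u)-R(v)\|\leq\sqrt{k}$ coordinatewise exactly as you do, and concludes with the forward and reverse triangle inequalities applied to $H(u)-H(v)=(\Gamma(u)-\Gamma(v))+(R(v)-R(u))$. No gaps; nothing further to add.
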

\begin{proof}
Denoting $R_i=\Gamma_i - H_i$ ($1\leq i\leq k$) and $R=(R_1, \ldots, R_k)$, we have $0\leq R_i(u), R_i(v) \leq 1$ ($1\leq i \leq k$), and hence $||R(u)-R(v)|| \leq \sqrt{k}$. Also, by definition $H=\Gamma - R$, and hance $H(u)-H(v) = (\Gamma(u)-\Gamma(v)) + (R(v)-R(u))$. Then, the result follows from triangle inequality.
\end{proof}
Our analysis also uses two well-known facts. The first is the sharp concentration of $\chi^2$-distributed random variables, which is also used in the proof of the Johnson-Lindenstrauss lemma \cite{im98, Dasgupta_JL}, and the second is the $2$-stability property of Gaussian distribution:

\begin{fact} \label{fact:sharpcon} If $\omega\in \mathbb{R}^m$ is a random $m$-dimensional vector each of whose entries is chosen from the standard Gaussian $\mathcal{N}(0,1)$ distribution, and $m=\Omega(\frac{\log n}{\epsilon^2})$, then with probability at least $1-\frac{1}{n^{\Theta(1)}}$, we have $$(1-\epsilon)\sqrt{m} \leq ||\omega|| \leq (1+\epsilon)\sqrt{m}$$
\end{fact}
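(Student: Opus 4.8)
The plan is to reduce the statement about $||\omega||$ to a two-sided concentration bound for the $\chi^2$-distributed quantity $X := ||\omega||^2 = \sum_{i=1}^m \omega_i^2$, whose mean is $m$ since each $\omega_i\sim\mathcal{N}(0,1)$ has $\mathbb{E}[\omega_i^2]=1$. The target event $(1-\epsilon)\sqrt{m}\leq ||\omega|| \leq (1+\epsilon)\sqrt{m}$ is exactly $(1-\epsilon)^2 m \leq X \leq (1+\epsilon)^2 m$, so it suffices to show that $X$ concentrates around $m$ to within a multiplicative $\Theta(\epsilon)$ factor, with failure probability $n^{-\Theta(1)}$. I would establish this by the exponential (Chernoff) method, for which the only ingredient needed is the moment generating function of a single squared Gaussian: a standard computation gives $\mathbb{E}[e^{t\omega_i^2}] = (1-2t)^{-1/2}$ for $t<1/2$, whence by independence $\mathbb{E}[e^{tX}] = (1-2t)^{-m/2}$.

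For the upper tail, I would fix $\gamma\in(0,1]$, apply Markov's inequality to $e^{tX}$ with $0<t<1/2$, obtaining $\Pr[X\geq (1+\gamma)m] \leq e^{-t(1+\gamma)m}(1-2t)^{-m/2}$, and then optimize over $t$. The minimizer is $t=\frac{\gamma}{2(1+\gamma)}$, which collapses the bound to $\big((1+\gamma)e^{-\gamma}\big)^{m/2}$; the elementary inequality $\log(1+\gamma)-\gamma \leq -\gamma^2/2 + \gamma^3/3 \le -\gamma^2/6$ (valid for $\gamma\le 1$) then yields $\Pr[X\geq (1+\gamma)m]\le \exp(-\Theta(\gamma^2 m))$. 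The lower tail is symmetric: applying Markov to $e^{-tX}$ for $t>0$ and using $\mathbb{E}[e^{-tX}]=(1+2t)^{-m/2}$, the optimal $t=\frac{\gamma}{2(1-\gamma)}$ gives $\Pr[X\leq (1-\gamma)m] \leq \big((1-\gamma)e^{\gamma}\big)^{m/2}\le \exp(-\Theta(\gamma^2 m))$, using $\log(1-\gamma)+\gamma\le-\gamma^2/2$.

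To finish, I would match these tails directly to the squared thresholds. Writing $(1+\epsilon)^2 = 1+(2\epsilon+\epsilon^2)$ and $(1-\epsilon)^2 = 1-(2\epsilon-\epsilon^2)$, the upper-tail bound with $\gamma_+ = 2\epsilon+\epsilon^2$ controls $\Pr[X > (1+\epsilon)^2 m]$ and the lower-tail bound with $\gamma_- = 2\epsilon-\epsilon^2$ controls $\Pr[X < (1-\epsilon)^2 m]$. Since $\gamma_\pm = \Theta(\epsilon)$ for $\epsilon\in(0,1)$, each tail contributes $\exp(-\Theta(\epsilon^2 m))$, so a union bound gives overall failure probability $\exp(-\Theta(\epsilon^2 m))$. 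Plugging in the hypothesis $m=\Omega(\log n/\epsilon^2)$ makes the exponent $\Theta(\log n)$, i.e., the failure probability is $n^{-\Theta(1)}$, which is the claim.

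I expect the only delicate point to be the optimization-and-Taylor step in the two tails: one must keep $\gamma$ (equivalently $\epsilon$) bounded away from the regime where the cubic correction in $\log(1+\gamma)-\gamma$ overtakes the quadratic term, so that the exponent is genuinely $\Theta(\gamma^2 m)$ rather than something weaker. Everything else -- the MGF evaluation and the final substitution of $m$ -- is routine.
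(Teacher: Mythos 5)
Your argument is correct: the reduction to a two-sided tail bound for the $\chi^2$ variable $X=\|\omega\|^2$, the moment generating function $\mathbb{E}[e^{tX}]=(1-2t)^{-m/2}$, the optimized Chernoff bounds $\bigl((1+\gamma)e^{-\gamma}\bigr)^{m/2}$ and $\bigl((1-\gamma)e^{\gamma}\bigr)^{m/2}$, and the Taylor estimates all check out, and the one caveat you flag (the inequality $-\gamma^2/2+\gamma^3/3\le-\gamma^2/6$ needs $\gamma\le 1$, whereas $\gamma_+=2\epsilon+\epsilon^2$ can exceed $1$ for $\epsilon$ close to $1$) is harmless, since for any constant range of $\epsilon$ one still gets $\log(1+\gamma)-\gamma\le -c\gamma^2$ for a suitable constant $c$, and the $\Theta(\cdot)$ and $\Omega(\cdot)$ in the statement absorb such constants. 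Note that the paper does not actually prove this statement: it is labeled a Fact and invoked as a known concentration bound, with a pointer to the proofs of the Johnson--Lindenstrauss lemma; your derivation is exactly the standard moment-generating-function argument from those references, so you have supplied the proof the paper omits rather than taken a different route.
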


\begin{fact} \label{fact:2stable} If $\theta$ is a vector each of whose entries is chosen from the standard Gaussian $\mathcal{N}(0,1)$ distribution, then for any vector $v$ of the same dimension, the random variable $ \theta \cdot v$ has Gaussian $\mathcal{N}(0, ||v||)$ distribution.
\end{fact}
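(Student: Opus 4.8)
The plan is to prove Fact~\ref{fact:2stable} by a direct moment generating function (equivalently, characteristic function) computation, exploiting the independence of the coordinates $\theta_i$ of $\theta$. Writing $\theta \cdot v = \sum_{i} v_i \theta_i$, the random variable of interest is a fixed linear combination of independent standard Gaussians, so the task reduces to identifying its law by matching transforms.

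First I would recall that a standard Gaussian $\theta_i \sim \mathcal{N}(0,1)$ has moment generating function $\mathbb{E}[e^{t\theta_i}] = e^{t^2/2}$ for all $t \in \mathbb{R}$. Next, using independence of the $\theta_i$, the MGF of $\theta \cdot v$ factorizes, and each factor is obtained by the scaling $t \mapsto t v_i$:
\[
\mathbb{E}\big[e^{t(\theta\cdot v)}\big] = \prod_{i} \mathbb{E}\big[e^{t v_i \theta_i}\big] = \prod_{i} e^{(t v_i)^2/2} = \exp\Big( \tfrac{t^2}{2}\sum_i v_i^2 \Big) = \exp\Big(\tfrac{t^2}{2} ||v||^2\Big).
\]
This is exactly the moment generating function of a centered Gaussian with variance $\sum_i v_i^2 = ||v||^2$. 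Since the MGF, being finite in a neighborhood of the origin, uniquely determines the distribution, I would conclude that $\theta \cdot v$ is Gaussian with mean $0$ and variance $||v||^2$; under the paper's convention in which the second parameter of $\mathcal{N}(\mu,\sigma)$ is the standard deviation, this is precisely $\mathcal{N}(0,||v||)$.

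There is essentially no serious obstacle here, as this is the classical additivity/stability property of the Gaussian family; the only point requiring care is bookkeeping of the variance convention, namely verifying that the aggregated variance is $||v||^2$ so that the standard deviation, the second argument of $\mathcal{N}$, comes out to $||v||$ rather than $||v||^2$. An equally clean alternative would be to bypass transforms entirely and invoke the elementary fact that a sum of independent normals is normal with means and variances added: each $v_i\theta_i \sim \mathcal{N}(0,|v_i|)$ by scaling, and summing over $i$ gives mean $0$ and variance $\sum_i v_i^2 = ||v||^2$, yielding the same conclusion.
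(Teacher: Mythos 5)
Your proof is correct. The paper states this $2$-stability property as a known \emph{Fact} and offers no proof of its own, so there is nothing to compare against; your moment generating function computation is the standard argument, and you correctly handle the one subtlety that matters here, namely that the paper's convention $\mathcal{N}(\mu,\sigma)$ takes the standard deviation as its second argument, so the conclusion is $\mathcal{N}(0,\|v\|)$ with variance $\|v\|^2$.
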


The plan for the analysis is as follows. We will first analyze (in theorem \ref{thm:fq}) the network traffic of Layered LSH and derive a formula for it based on $D$, the bin size of LSH function $G$. We will see that as expected, increasing $D$ reduces the network traffic, and our formula will show the exact relation between the two. We will next analyze (in theorem \ref{thm:rc}) the load balance of Layered LSH and derive a formula for it, again based on $D$. Intuitively speaking, a large value of $D$ tends to put all points on one or few machines, which is undesirable from the load balance perspective. Our analysis will formulate this dependence and show its exact form. These two results together will then show the exact tradeoff governing the choice of $D$, which we will use to prove (in Corollary \ref{cor:Dsqrtk}) that with an appropriate choice of $D$, Layered LSH achieves both network efficiency and load balance. Before proceeding to the analysis, we give a definition:
\begin{defn}
\label{defn:f} Having chosen LSH functions $G,H$, for a query point $q\in Q$, with offsets $q+\delta_i$ ($1\leq i\leq L$), define
$$f_q = |\{GH(q+\delta_i)| 1\leq i\leq L\}|$$
to be the number of (Key, Value) pairs sent over the network for query $q$.
\end{defn}

Since $q$ is $d$-dimensional, the network load due to query $q$ is $O(df_q)$. Hence, to analyze the network efficiency of Layered LSH, it suffices to analyze $f_q$. This is done in the following theorem:
\begin{thm} \label{thm:fq} For any query point $q$, with high probability, that is probability at least $1-\frac{1}{n^{\Theta(1)}}$, we have: $$f_q=O(\frac{k}{D})$$
\end{thm}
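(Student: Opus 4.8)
The plan is to bound $f_q$ by the length of the real interval into which the scalars $\frac{\alpha\cdot H(q+\delta_i)+\beta}{D}$ fall, since $GH(q+\delta_i)=\lfloor \frac{\alpha\cdot H(q+\delta_i)+\beta}{D}\rfloor$ and a set of reals lying in an interval of length $\ell$ has at most $\ell+O(1)$ distinct integer floors. The additive shift $\beta$ is common to all offsets and hence irrelevant, so it suffices to control the spread $\max_i \alpha\cdot H(q+\delta_i) - \min_i \alpha\cdot H(q+\delta_i)$, which is at most $2\max_i |\alpha\cdot v_i|$ where $v_i := H(q+\delta_i)-H(q)\in\mathbb{Z}^k$. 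Thus I would reduce the theorem to showing $\max_i|\alpha\cdot v_i| = O(k)$ with high probability, which gives $f_q \le \tfrac{2}{D}\max_i|\alpha\cdot v_i| + O(1) = O(k/D)$ in the relevant regime $D=O(k)$.

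First I would control the Euclidean length of each $v_i$. Applying Lemma \ref{lem:trieq} with $u=q+\delta_i$ and $v=q$ gives $\|v_i\| \le \|\Gamma(q+\delta_i)-\Gamma(q)\| + \sqrt{k}$. Since $\Gamma(q+\delta_i)-\Gamma(q) = \tfrac1W(a_1\cdot\delta_i,\dots,a_k\cdot\delta_i)$ and each $\|\delta_i\|=r$ is constant, Fact \ref{fact:2stable} shows each coordinate $a_l\cdot\delta_i$ is an independent $\mathcal{N}(0,r)$ variable, so $\tfrac{W}{r}\|\Gamma(q+\delta_i)-\Gamma(q)\|$ is the norm of a $k$-dimensional standard Gaussian vector. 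Because $k=\Omega(\log n)$ by the choice in Theorem \ref{thm:ELSH}, Fact \ref{fact:sharpcon} (with constant $\epsilon$) yields $\|\Gamma(q+\delta_i)-\Gamma(q)\| = O(\sqrt{k})$, and hence $\|v_i\| = O(\sqrt{k})$, for each fixed $i$ with probability $1-n^{-\Theta(1)}$. A union bound over the $L=O(n^{2/c})$ offsets, with the constant in Fact \ref{fact:sharpcon} chosen large enough to beat this polynomial factor, makes $\|v_i\|=O(\sqrt k)$ hold simultaneously for all $i$ with probability $1-n^{-\Theta(1)}$.

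Next I would project by $\alpha$. Crucially, $\alpha$ is drawn independently of $H$ and of the offsets $\delta_i$, so conditioned on the (now fixed) vectors $v_i$, Fact \ref{fact:2stable} gives $\alpha\cdot v_i \sim \mathcal{N}(0,\|v_i\|)$, a mean-zero Gaussian of standard deviation $\|v_i\|=O(\sqrt k)$. A standard Gaussian tail bound gives $\Pr[|\alpha\cdot v_i| > s] \le 2e^{-\Omega(s^2/k)}$; taking $s = C\sqrt{k\log n}$ and union-bounding over the $L$ offsets keeps the failure probability at $n^{-\Theta(1)}$ for $C$ large enough, so $\max_i|\alpha\cdot v_i| = O(\sqrt{k\log n})$ with high probability. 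Finally, since $k \ge \frac{\log n}{\log (1/p_2)} = \Omega(\log n)$ we have $\log n = O(k)$, whence $\sqrt{k\log n} = O(k)$; combining with the reduction of the first paragraph gives $f_q = O(k/D)$.

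The main obstacle is the two-stage handling of the two independent sources of randomness: $H$ (and thus each $v_i$) must first be pinned down via the $\chi^2$ concentration of Fact \ref{fact:sharpcon}, routed through $\Gamma$ because $H$ itself is a floor and not directly amenable to the $2$-stability argument, and only then can $\alpha$ be applied conditionally. The other point requiring care is the bookkeeping of failure probabilities: both union bounds run over $L=\mathrm{poly}(n)$ offsets, so the $\Theta(1)$ exponents supplied by Fact \ref{fact:sharpcon} and the Gaussian tail must be taken large enough to absorb this polynomial loss, and the final estimate hinges on the identity $\sqrt{k\log n}=O(k)$, which is exactly what converts an apparent $O(\sqrt{k\log n}/D)$ bound into the claimed $O(k/D)$.
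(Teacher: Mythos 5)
Your proof is correct, and its skeleton matches the paper's: bound $f_q$ by the spread of the integers $GH(q+\delta_i)$, reduce to controlling $\alpha$ applied to differences of $H$-values, pass from $H$ to $\Gamma$ via Lemma \ref{lem:trieq}, and show those differences have norm $O(\sqrt{k})$ using $2$-stability (Fact \ref{fact:2stable}) together with the $\chi^2$ concentration of Fact \ref{fact:sharpcon} and a union bound over polynomially many offsets. The one step you handle differently is the projection by $\alpha$: the paper applies Cauchy--Schwarz and a second invocation of Fact \ref{fact:sharpcon} to get $\|\alpha\|\le 2\sqrt{k}$, which yields $\max_{i,j}\left|\alpha\cdot\left(H(q+\delta_i)-H(q+\delta_j)\right)\right|=O(\sqrt{k}\cdot\sqrt{k})=O(k)$ once $\|\alpha\|$ is pinned down, whereas you condition on the vectors $v_i$ and use the Gaussian law $\alpha\cdot v_i\sim\mathcal{N}(0,\|v_i\|)$ with a tail bound and a further union bound, getting $O(\sqrt{k\log n})$ and then invoking $\log n=O(k)$ (from $k\ge \log n/\log(1/p_2)$) to recover $O(k)$. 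Both are valid and land in the same place; your variant gives the formally sharper intermediate bound $O(\sqrt{k\log n}/D)$ in regimes where $k\gg\log n$, while the paper's Cauchy--Schwarz route needs no second union bound over offsets and produces an explicit constant, namely $f_q\le 2(1+\frac{4}{cW})\frac{k}{D}+1$.
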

\begin{proof}
Since for any offset $q+\delta_i$, the value $GH(q+\delta_i)$ is an integer, we have: 
\begin{equation} \label{eq:fqmax}
f_q \leq \max_{1\leq i,j \leq L} \left\{GH(q+\delta_i)-GH(q+\delta_j)\right\}
\end{equation}
For any vector $v$, we have: $$\frac{\alpha.v+\beta}{D} -1 \leq G(v) \leq \frac{\alpha.v+\beta}{D}$$ hence for any $1\leq i,j\leq L$: $$GH(q+\delta_i)-GH(q+\delta_j) \leq \frac{\alpha\cdot (H(q+\delta_i)-H(q+\delta_j))}{D}+1$$
Thus, from equation \ref{eq:fqmax}, we get:
\[
f_q \leq \frac{1}{D} \max_{1\leq i,j \leq L} \left\{\alpha\cdot (H(q+\delta_i)-H(q+\delta_j)) \right\}+1
\]
From Cauchy-Schwartz inequality for inner products, we have for any $1\leq i,j\leq L$:
\[
\alpha\cdot (H(q+\delta_i)-H(q+\delta_j)) \leq ||\alpha|| \cdot ||H(q+\delta_i)-H(q+\delta_j)||
\]
Hence, we get:
\begin{equation}
\label{eq:fqnorms}
f_q \leq \frac{||\alpha||}{D} \cdot \max_{1\leq i,j \leq L} \left\{||H(q+\delta_i)-H(q+\delta_j)|| \right\}+1
\end{equation}
For any $1\leq i,j\leq L$, we know from lemma \ref{lem:trieq}: 
\begin{equation}
\label{eq:HGamma}
||H(q+\delta_i)-H(q+\delta_j)|| \leq ||\Gamma(q+\delta_i)-\Gamma(q+\delta_j)|| + \sqrt{k}
\end{equation}
Furthermore for any $1\leq t\leq k$, since
$$\Gamma_t(q+\delta_i) - \Gamma_t(q+\delta_j) = \frac{a_t\cdot (\delta_i-\delta_j)}{W}$$
we know, using Fact \ref{fact:2stable}, that $\Gamma_t(q+\delta_i) - \Gamma_t(q+\delta_j)$ is distributed as Gaussian $\mathcal{N}(0,\frac{||\delta_i-\delta_j||}{W})$. Now, recall from theorem \ref{thm:ELSH} that for our LSH function, $k\geq \frac{\log n}{\log (1/p_2)}$. Hence, there is a constant $\epsilon = \epsilon(p_2) <1$ for which we have, using Fact \ref{fact:sharpcon}: 
\[
||\Gamma(q+\delta_i) - \Gamma(q+\delta_j)|| \leq (1+\epsilon)\sqrt{k}\frac{||\delta_i-\delta_j||}{W}
\]
with probability at least $1-1/n^{\Theta(1)}$. Since, as explained in section \ref{sec:prelim}, all offsets are chosen from the surface of the sphere $B(q,1/c)$ of radius $1/c$ centered at $q$, we have: $||\delta_i-\delta_j||\leq 2/c$. Hence overall, for any $1\leq i,j\leq k$:
\[
||\Gamma(q+\delta_i) - \Gamma(q+\delta_j)|| \leq 2(1+\epsilon)\frac{\sqrt{k}}{cW} \leq 4\frac{\sqrt{k}}{cW}
\]
with high probability. Then, since there are only $L^2$ different choices of $i,j$, and $L$ is only polynomially large in $n$, we get:
\[
\max_{1\leq i,j\leq L} \{||\Gamma(q+\delta_i) - \Gamma(q+\delta_j)||\} \leq 4\frac{\sqrt{k}}{cW}
\]
with high probability. Then, using equation \ref{eq:HGamma}, we get with high probability:
\begin{equation}
\label{eq: Hbound}
\max_{1\leq i,j\leq L} \{||H(q+\delta_i) - H(q+\delta_j)||\} \leq (1+\frac{4}{cW})\sqrt{k}
\end{equation}
Furthermore, since each entry of $\alpha\in \mathbb{R}^k$ is distributed as $\mathcal{N}(0,1)$, another application of Fact \ref{fact:sharpcon} gives (again with $\epsilon=\epsilon(p_2)$): 
\begin{equation}
\label{eq:alphabound}
||\alpha||\leq (1+\epsilon)\sqrt{k} \leq 2\sqrt{k}
\end{equation}
with high probability. Then, equations \ref{eq:fqnorms}, \ref{eq: Hbound}, and \ref{eq:alphabound} together give:
\[
f_q \leq 2(1+\frac{4}{cW})\frac{k}{D}+1
\]
which finishes the proof. 
\end{proof}
\begin{rem} A surprising property of Layered LSH demonstrated by theorem \ref{thm:fq} is that the network load is independent of the number of query offsets, $L$. Note that with Entropy LSH, to increase the search quality one needs to increase the number of offsets, which will then directly increase the network load. Similarly, with basic LSH, to increase the search quality one needs to increase the number of hash tables, which again directly increases the network load. However, with Layered LSH the network efficiency is achieved independently of the level of search quality. Hence, search quality can be increased without any effect on the network load! 
\end{rem}
Next, we proceed to analyzing the load balance of \algo. First, recalling the classic definition of error function:
\[
\mbox{erf}(z) = \frac{2}{\sqrt{\pi}}\int_{0}^{z} e^{-\tau^2} d\tau
\]
we define the function $P(\cdot)$:
\begin{equation}\label{eq:P}
P(z) = \mbox{erf(z)} - \frac{1}{\sqrt{\pi}z}(1-e^{-z^2})
\end{equation}
and prove the following lemma:
\begin{lemma} For any two points $u,v\in \mathbb{R}^k$ with $||u-v||=\lambda$, we have:
\label{lem:GueqGv}
\[
Pr[G(u)=G(v)] = P(\frac{D}{\sqrt{2}\lambda})
\]
\end{lemma}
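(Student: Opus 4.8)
The plan is to reduce the $k$-dimensional collision event to a one-dimensional problem and then evaluate a single Gaussian integral in closed form. First I would observe that $G(u)=G(v)$ holds exactly when the two scalars $\alpha\cdot u+\beta$ and $\alpha\cdot v+\beta$ land in the same half-open interval of width $D$, since $G$ floors $(\alpha\cdot(\cdot)+\beta)/D$. These two scalars differ by precisely $X:=\alpha\cdot(u-v)$, and by the $2$-stability of the Gaussian (Fact \ref{fact:2stable}) the variable $X$ is distributed as $\mathcal{N}(0,\lambda)$, i.e. has variance $\lambda^2$. The offset $\beta$, uniform on $[0,D]$ and independent of $\alpha$, supplies a uniformly random placement of the bin boundaries relative to the two projected points. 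So the natural move is to condition on $|X|$ and average over it.

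Next I would compute the conditional collision probability. Conditioned on $|X|=s$, the two points sit a distance $s$ apart on the line, and, because $\beta$ is uniform, the nearest bin boundary lies at a position uniformly distributed in $[0,D]$; the points collide iff the length-$s$ interval between them contains no boundary, which happens with probability $\max\{0,\,1-s/D\}$. Hence
\[
\Pr[G(u)=G(v)] = \mathbb{E}\!\left[\max\{0,\,1-|X|/D\}\right] = 2\int_{0}^{D}\Bigl(1-\frac{x}{D}\Bigr)\frac{1}{\lambda\sqrt{2\pi}}\,e^{-x^2/(2\lambda^2)}\,dx,
\]
using the symmetry of the Gaussian density and the fact that the integrand vanishes for $x>D$.

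Finally I would evaluate the two resulting integrals. In the first, $2\int_0^D \frac{1}{\lambda\sqrt{2\pi}}e^{-x^2/(2\lambda^2)}\,dx$, the substitution $\tau=x/(\sqrt{2}\lambda)$ turns it into $\frac{2}{\sqrt{\pi}}\int_0^{z}e^{-\tau^2}\,d\tau=\mbox{erf}(z)$ with $z=\frac{D}{\sqrt{2}\lambda}$. The second integral uses the elementary antiderivative of $x\,e^{-x^2/(2\lambda^2)}$, giving $\int_0^D x\,e^{-x^2/(2\lambda^2)}\,dx=\lambda^2\bigl(1-e^{-D^2/(2\lambda^2)}\bigr)$; tracking the constants and rewriting $D^2/(2\lambda^2)=z^2$ and $\lambda/D=1/(\sqrt{2}z)$ collapses the coefficient to exactly $\frac{1}{\sqrt{\pi}z}$, so this term equals $\frac{1}{\sqrt{\pi}z}(1-e^{-z^2})$. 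Subtracting the second from the first yields $\mbox{erf}(z)-\frac{1}{\sqrt{\pi}z}(1-e^{-z^2})=P(z)$, which is the claim.

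The crux — and the only genuinely non-mechanical step — is the conditional collision probability $\max\{0,1-s/D\}$: it relies on $\beta$ being independent of the projection direction $\alpha$ so that the bin boundaries are uniformly randomized relative to the projected points, and on correctly handling the $s>D$ regime (where a collision is impossible) so that the expectation is effectively taken only over $|X|\le D$. Everything after that is a careful but routine Gaussian integration whose main risk is mis-tracking the normalizing constants, which is exactly what the substitution $z=D/(\sqrt{2}\lambda)$ is designed to keep clean.
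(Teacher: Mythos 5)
Your proposal is correct and follows essentially the same route as the paper's proof: condition on $\alpha\cdot(u-v)=l$ to get the collision probability $\max\{0,1-|l|/D\}$ from the uniformity of $\beta$, invoke $2$-stability to identify the law of $\alpha\cdot(u-v)$ as $\mathcal{N}(0,\lambda)$, and evaluate the resulting Gaussian integral via the substitution $z=D/(\sqrt{2}\lambda)$ to obtain $P(z)$. The constant-tracking in your second integral checks out, so nothing further is needed.
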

\begin{proof} Since $\beta$ is uniformly distributed over $[0,D]$, we have:
\[
Pr[G(u)=G(v) |\, \alpha \cdot (u-v) = l] = \max\left\{0, 1-\frac{|l|}{D}\right\} 
\]
Then, since by Fact \ref{fact:2stable}, $\alpha \cdot (u-v)$ is distributed as Gaussian $\mathcal{N}(0,\lambda)$, we have:
\begin{align*}
Pr[G(u)=G(v)] &= \int_{-D}^{D} (1-\frac{|l|}{D})\frac{1}{\sqrt{2\pi}\lambda}e^{-\frac{l^2}{2\lambda^2}}dl\\
&= 2\int_{0}^{D} (1-\frac{l}{D})\frac{1}{\sqrt{2\pi}\lambda}e^{-\frac{l^2}{2\lambda^2}}dl\\
&= \int_{0}^{D} \frac{\sqrt{2}}{\sqrt{\pi}\lambda}e^{-\frac{l^2}{2\lambda^2}}dl - \int_{0}^{D} \frac{l\sqrt{2}}{D\lambda\sqrt{\pi}}e^{-\frac{l^2}{2\lambda^2}}dl\\
&= \mbox{erf}(\frac{D}{\sqrt{2}\lambda}) - \sqrt{\frac{2}{\pi}}\frac{\lambda}{D}(1-e^{-\frac{D^2}{2\lambda^2}}) \\
&= P(\frac{D}{\sqrt{2}\lambda})
\end{align*}
\end{proof}

One can easily see that $P(\cdot)$ is a monotonically increasing function, and for any $0<\xi<1$ there exists a number $z=z_\xi$ such that $P(z_\xi)=\xi$. Using this notation and the previous lemma, we prove the following theorem:

\begin{thm}
\label{thm:rc} For any constant $0<\xi<1$, there is a $\lambda_\xi$ such that
\[
\lambda_\xi/W =  O(1+\frac{D}{\sqrt{k}})
\]
and for any two points $u,v$ with $||u-v||\geq \lambda_\xi$, we have:
\[
Pr[GH(u)=GH(v)] \leq \xi + o(1)
\]
where $o(1)$ is polynomially small in $n$.
\end{thm}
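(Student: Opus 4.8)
The plan is to handle the two independent sources of randomness --- the inner hash $H$ and the outer hash $G$ --- in sequence, fixing $H$ first. Conditioned on $H$, the images $H(u),H(v)\in\mathbb{R}^k$ are determined, so writing $\mu=||H(u)-H(v)||$, Lemma~\ref{lem:GueqGv} gives exactly $Pr[G(H(u))=G(H(v))\mid H]=P(\frac{D}{\sqrt{2}\mu})$. Since $P(\cdot)$ is monotonically increasing and $\frac{D}{\sqrt{2}\mu}$ is decreasing in $\mu$, a high-probability \emph{lower} bound on $\mu$ converts directly into an \emph{upper} bound on this conditional collision probability. Thus the whole theorem reduces to the claim that, when $u,v$ are far apart, $||H(u)-H(v)||$ is large with high probability over the choice of $H$.

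To lower bound $\mu$, I would first apply Lemma~\ref{lem:trieq} to pass from $H$ to its unrounded version $\Gamma$, obtaining $||H(u)-H(v)||\geq ||\Gamma(u)-\Gamma(v)||-\sqrt{k}$. The vector $\Gamma(u)-\Gamma(v)$ has independent coordinates $\Gamma_t(u)-\Gamma_t(v)=\frac{a_t\cdot(u-v)}{W}$, each of which is $\mathcal{N}(0,\frac{||u-v||}{W})$ by Fact~\ref{fact:2stable}; hence $\Gamma(u)-\Gamma(v)$ is $\frac{||u-v||}{W}$ times a standard Gaussian $k$-vector. Because $k\geq\frac{\log n}{\log(1/p_2)}$ by Theorem~\ref{thm:ELSH}, I can invoke Fact~\ref{fact:sharpcon} with the same constant $\epsilon=\epsilon(p_2)<1$ already used in the proof of Theorem~\ref{thm:fq} to get $||\Gamma(u)-\Gamma(v)||\geq (1-\epsilon)\frac{||u-v||}{W}\sqrt{k}$ with probability at least $1-1/n^{\Theta(1)}$. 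Combining, with high probability $\mu\geq\sqrt{k}\left((1-\epsilon)\frac{||u-v||}{W}-1\right)$.

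It then remains to pick $\lambda_\xi$ and assemble the pieces. Let $z_\xi$ be the constant (guaranteed by monotonicity of $P$) with $P(z_\xi)=\xi$; the conditional collision probability is at most $\xi$ precisely once $\mu\geq\frac{D}{\sqrt{2}z_\xi}$. Forcing the lower bound from the previous paragraph to exceed $\frac{D}{\sqrt{2}z_\xi}$ and solving for $||u-v||$ suggests taking $\lambda_\xi=\frac{W}{1-\epsilon}\left(1+\frac{D}{\sqrt{2}z_\xi\sqrt{k}}\right)$, which immediately gives $\lambda_\xi/W=O(1+\frac{D}{\sqrt{k}})$ since $\epsilon$ and $z_\xi$ are constants. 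For any $u,v$ with $||u-v||\geq\lambda_\xi$, the event $E=\{\mu\geq\frac{D}{\sqrt{2}z_\xi}\}$ holds with probability $1-1/n^{\Theta(1)}$, and on $E$ monotonicity of $P$ yields $Pr[GH(u)=GH(v)\mid H]\leq P(z_\xi)=\xi$; removing the conditioning by the law of total probability then gives $Pr[GH(u)=GH(v)]\leq \xi\cdot Pr[E]+Pr[\bar E]\leq \xi+1/n^{\Theta(1)}=\xi+o(1)$. The only genuinely delicate point is the bookkeeping that cleanly separates the randomness of $H$ from that of $G$ in the conditioning argument, together with confirming that $\epsilon=\epsilon(p_2)$ may be taken constant while still meeting the hypothesis $k=\Omega(\log n/\epsilon^2)$ of Fact~\ref{fact:sharpcon}; both, however, mirror the treatment already present in the proof of Theorem~\ref{thm:fq}, so I expect no new obstacle beyond routine care.
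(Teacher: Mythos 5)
Your proposal is correct and follows essentially the same route as the paper's own proof: Lemma~\ref{lem:trieq} plus Facts~\ref{fact:sharpcon} and~\ref{fact:2stable} to lower-bound $||H(u)-H(v)||$ with high probability, the identical choice $\lambda_\xi=\frac{W}{1-\epsilon}\bigl(1+\frac{D}{\sqrt{2}z_\xi\sqrt{k}}\bigr)$, and Lemma~\ref{lem:GueqGv} with monotonicity of $P$ to conclude. Your explicit conditioning on $H$ and the law-of-total-probability step merely make precise what the paper states more tersely.
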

\begin{proof}
Let $u,v\in\mathbb{R}^d$ be two points and denote $||u-v||=\lambda$. Then by lemma \ref{lem:trieq}, we have:
\[
||H(u)-H(v)|| \geq ||\Gamma(u)-\Gamma(v)|| - \sqrt{k} 
\]
As in the proof of theorem \ref{thm:fq}, one can see, using $k\geq \frac{\log n}{\log {(1/p_2)}}$ (from theorem \ref{thm:ELSH}) and Fact \ref{fact:sharpcon}, that there exists an $\epsilon = \epsilon(p_2)=\Theta(1)$ such that with probability at least $1-\frac{1}{n^{\Theta(1)}}$ we have:
\[
||\Gamma(u)-\Gamma(v)|| \geq (1-\epsilon) \frac{\lambda\sqrt{k}}{W}
\]
Hence, with probability at least $1-\frac{1}{n^{\Theta(1)}}$, we have:
\begin{equation*}
%\label{eq:HLowerBound}
||H(u)-H(v)|| \geq \lambda' = ((1-\epsilon) \frac{\lambda}{W} - 1)\sqrt{k} 
\end{equation*}
Now, letting:
\[
\lambda_\xi = \frac{1}{1-\epsilon}(1+\frac{D}{z_\xi \sqrt{2k}})W
\]
we have if $\lambda \geq \lambda_\xi$ then $\lambda' \geq \frac{D}{\sqrt{2}{z_\xi}}$, and hence by lemma \ref{lem:GueqGv}:
\[
Pr[GH(u)=GH(v)|\, ||H(u)-H(v)|| \geq \lambda'] \leq P(\frac{D}{\sqrt{2}\lambda'}) \leq \xi
\]
which finishes the proof by recalling $Pr[||H(u)-H(v)|| < \lambda'] = o(1)$.
\end{proof}

Theorems \ref{thm:fq}, \ref{thm:rc} show the tradeoff governing the choice of parameter $D$. Increasing $D$ reduces network traffic at the cost of more skewed load distribution. We need to choose $D$ such that the load is balanced yet the network traffic is low. Theorem \ref{thm:rc} shows that choosing $D=o(\sqrt{k})$ does not asymptotically help with the distance threshold at which points become likely to be sent to different machines. On the other hand, theorem \ref{thm:rc} also shows that choosing $D=\omega(\sqrt{k})$ is undesirable, as it unnecessarily skews the load distribution. To observe this more clearly, recall that intuitively speaking, the goal in Layered LSH is that if two data point $p_1, p_2$ hash to the same values as two of the offsets $q+\delta_i, q+\delta_j$ (for some $1\leq i,j\leq L$) of a query point $q$ (i.e., $H(p_1)=H(q+\delta_i)$ and $H(p_2)=H(q+\delta_j)$), then $p_1,p_2$ are likely to be sent to the same machine. Since $H$ has a bin size of $W$, such pair of points $p_1,p_2$ most likely have distance $O(W)$. Hence, $D$ should be only large enough to make points which are $O(W)$ away likely to be sent to the same machine. Theorem \ref{thm:rc} shows that to do so, we need to choose $D$ such that:
\[
O(1+\frac{D}{\sqrt{k}}) = O(1)
\]
that is $D=O(\sqrt{k})$. Then, by theorem \ref{thm:fq}, to minimize the network traffic, we choose $D=\Theta(\sqrt{k})$, and get $f_q  = O(\sqrt{k}) = O(\sqrt{\log n})$. This is summarized in the following corollary:

\begin{corol}
\label{cor:Dsqrtk}
Choosing $D=\Theta(\sqrt{k})$, Layered LSH guarantees that the number of (Key,Value) pairs sent over the network per query is $O(\sqrt{\log n})$ with high probability, and yet points which are $\Omega(W)$ away get sent to different machines with constant probability.
\end{corol}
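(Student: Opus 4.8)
The plan is to obtain the corollary by specializing the two main theorems to the choice $D=\Theta(\sqrt{k})$ and invoking $k=\Theta(\log n)$; no new probabilistic machinery is needed, since all the technical work has already been carried out in Theorems \ref{thm:fq} and \ref{thm:rc}. The only real content is to verify that this single choice of $D$ makes both the network-cost bound and the load-balance threshold simultaneously favorable, and that it is precisely the point at which the tradeoff between them is optimized.

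For the network-cost half, I would start from Theorem \ref{thm:fq}, which gives $f_q=O(k/D)$ with probability at least $1-1/n^{\Theta(1)}$. Substituting $D=\Theta(\sqrt{k})$ yields $f_q=O(k/\sqrt{k})=O(\sqrt{k})$, and since $k=\Theta(\log n)$ (the value forced by Theorem \ref{thm:ELSH}, with $p_2$ constant), this is $O(\sqrt{\log n})$, still holding with high probability. This settles the first claim.

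For the load-balance half, I would fix any constant $\xi\in(0,1)$, say $\xi=1/2$, and apply Theorem \ref{thm:rc}. That theorem supplies a threshold $\lambda_\xi$ with $\lambda_\xi/W=O(1+D/\sqrt{k})$; substituting $D=\Theta(\sqrt{k})$ collapses this to $\lambda_\xi/W=O(1)$, i.e. $\lambda_\xi=\Theta(W)$. Hence there is a constant $C$ with $\lambda_\xi\le CW$, so any two points $u,v$ with $||u-v||\ge CW=\Omega(W)$ satisfy $||u-v||\ge\lambda_\xi$ and thus, by Theorem \ref{thm:rc}, $Pr[GH(u)=GH(v)]\le\xi+o(1)$. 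Equivalently, such a pair is separated onto different machines with probability at least $1-\xi-o(1)=\Omega(1)$, where the $o(1)$ term is polynomially small in $n$. This gives the second claim.

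The step deserving the most care is not a computation but the justification that $D=\Theta(\sqrt{k})$ is the correct regime rather than an arbitrary one. The additive $O(1)$ in the threshold of Theorem \ref{thm:rc} means that taking $D=o(\sqrt{k})$ cannot push $\lambda_\xi$ below $\Theta(W)$, so it buys nothing in load balance while worsening the guarantee of Theorem \ref{thm:fq} to $f_q=O(k/D)=\omega(\sqrt{k})$; conversely $D=\omega(\sqrt{k})$ forces $\lambda_\xi=\omega(W)$, needlessly coarsening the machine assignment. Thus $D=\Theta(\sqrt{k})$ is exactly the crossover at which the two bounds meet, which is what licenses reporting both an $O(\sqrt{\log n})$ network cost and $\Omega(W)$ separation at the same time.
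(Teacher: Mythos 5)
Your proposal is correct and follows essentially the same route as the paper: the corollary is obtained by substituting $D=\Theta(\sqrt{k})$ into Theorem \ref{thm:fq} to get $f_q=O(\sqrt{k})=O(\sqrt{\log n})$ (using $k=\Theta(\log n)$ from Theorem \ref{thm:ELSH} with $p_2$ constant), and into Theorem \ref{thm:rc} to collapse $\lambda_\xi/W=O(1+D/\sqrt{k})$ to $O(1)$ for a fixed constant $\xi$. Your closing discussion of why $D=\Theta(\sqrt{k})$ is the crossover point mirrors the paper's own tradeoff paragraph preceding the corollary.
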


\begin{rem} Corollary \ref{cor:Dsqrtk} shows that, compared to the simple distributed implementation of Entropy LSH and basic LSH, Layered LSH exponentially improves the network load, from $O(n^{\Theta(1)})$ to $O(\sqrt{\log n})$, while maintaining the load balance across the different machines. 
\end{rem}

\section{Experiments}
\label{sec:exp}

\begin{figure*}[ht]
\begin{centering}
\subfigure[Random: recall] { \epsfig{file = 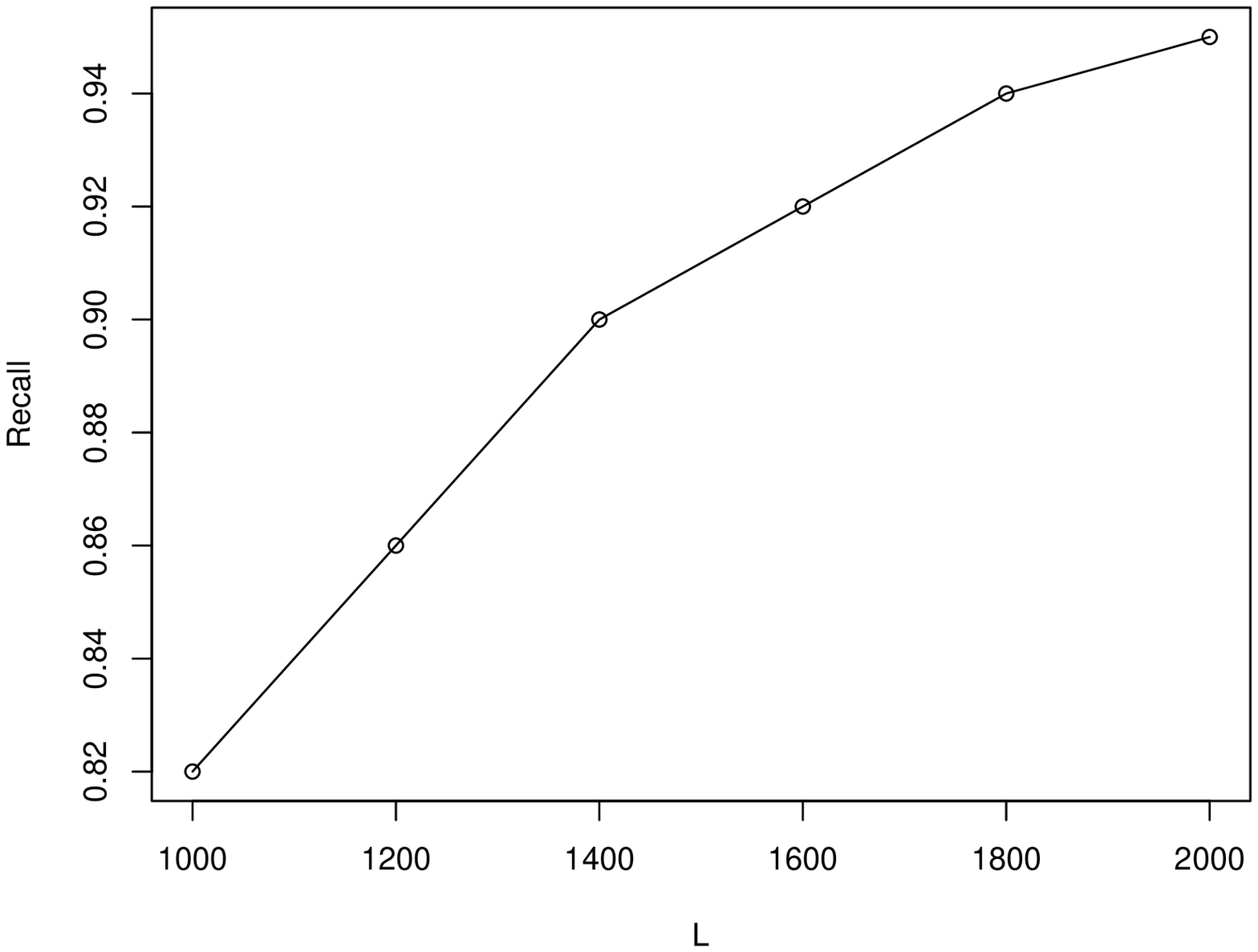, width = 0.3\linewidth, height = 1.4in} \label{Random:recall}} 
\subfigure[Random: shuffle size]{ \epsfig{file = 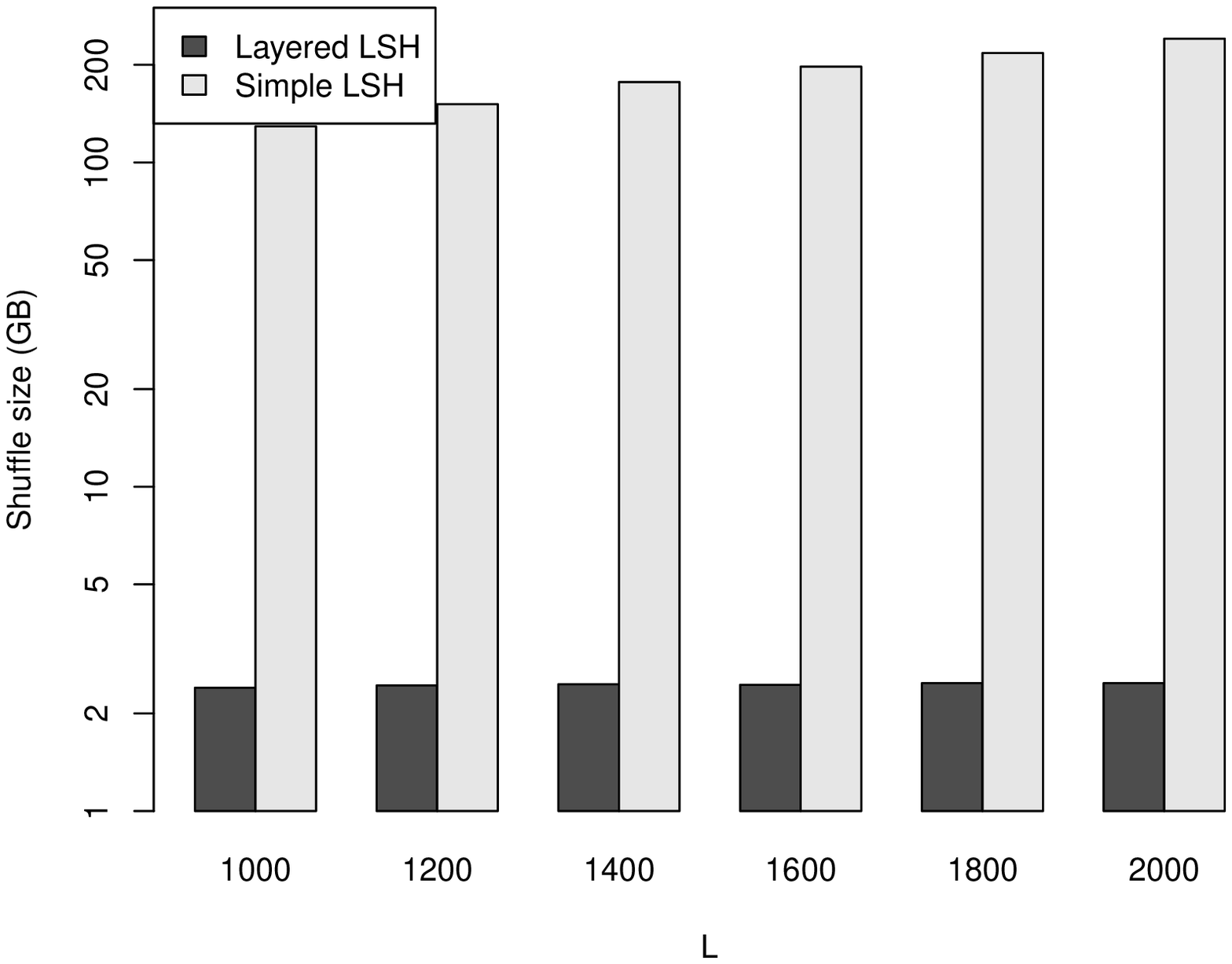, width = 0.3\linewidth, height = 1.4in} \label{Random:shuffle}} 
\subfigure[Random: runtime]{ \epsfig{file = 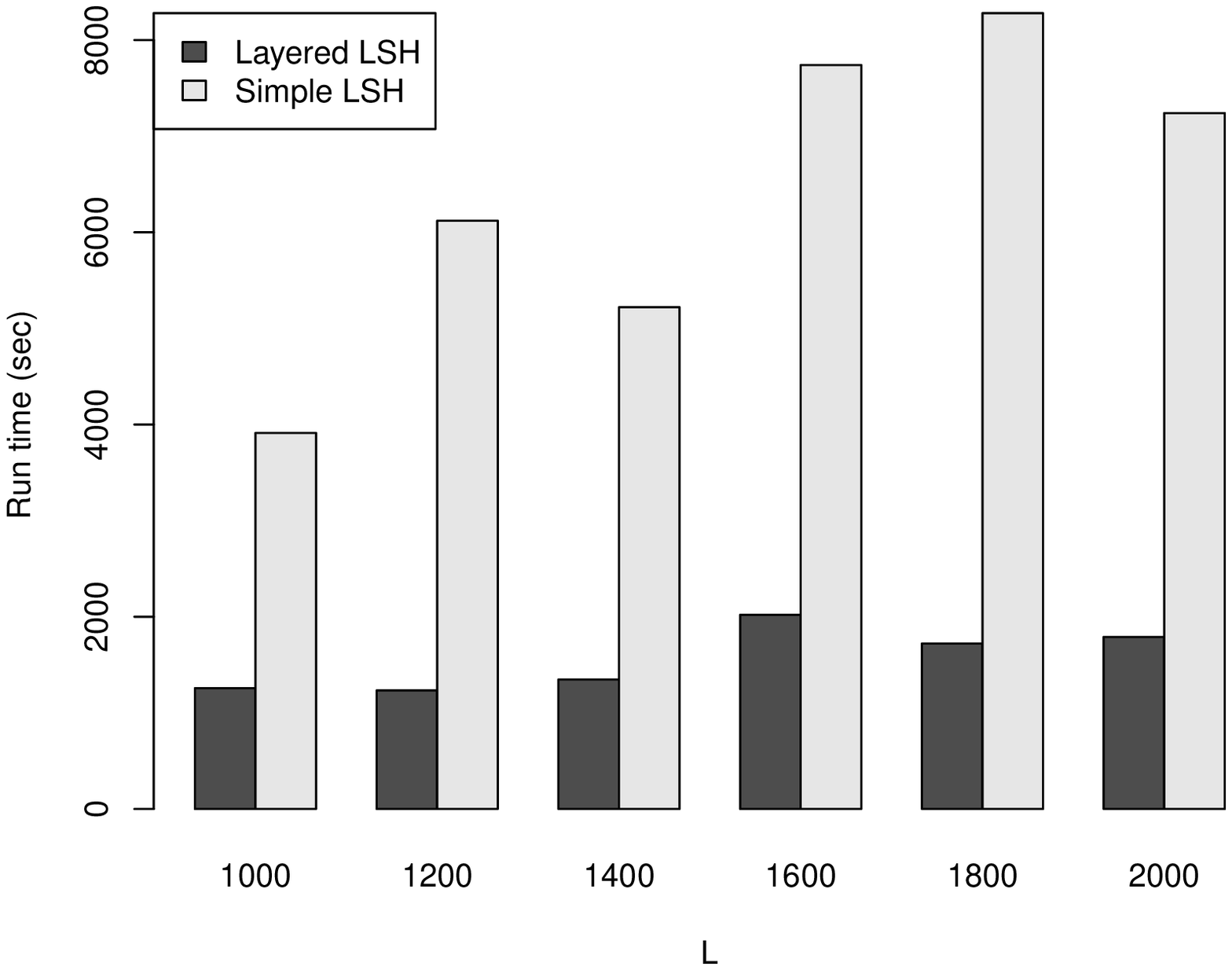, width = 0.3\linewidth, height = 1.4in} \label{Random:runtime}} 
\\
\subfigure[Wiki: recall] { \epsfig{file = 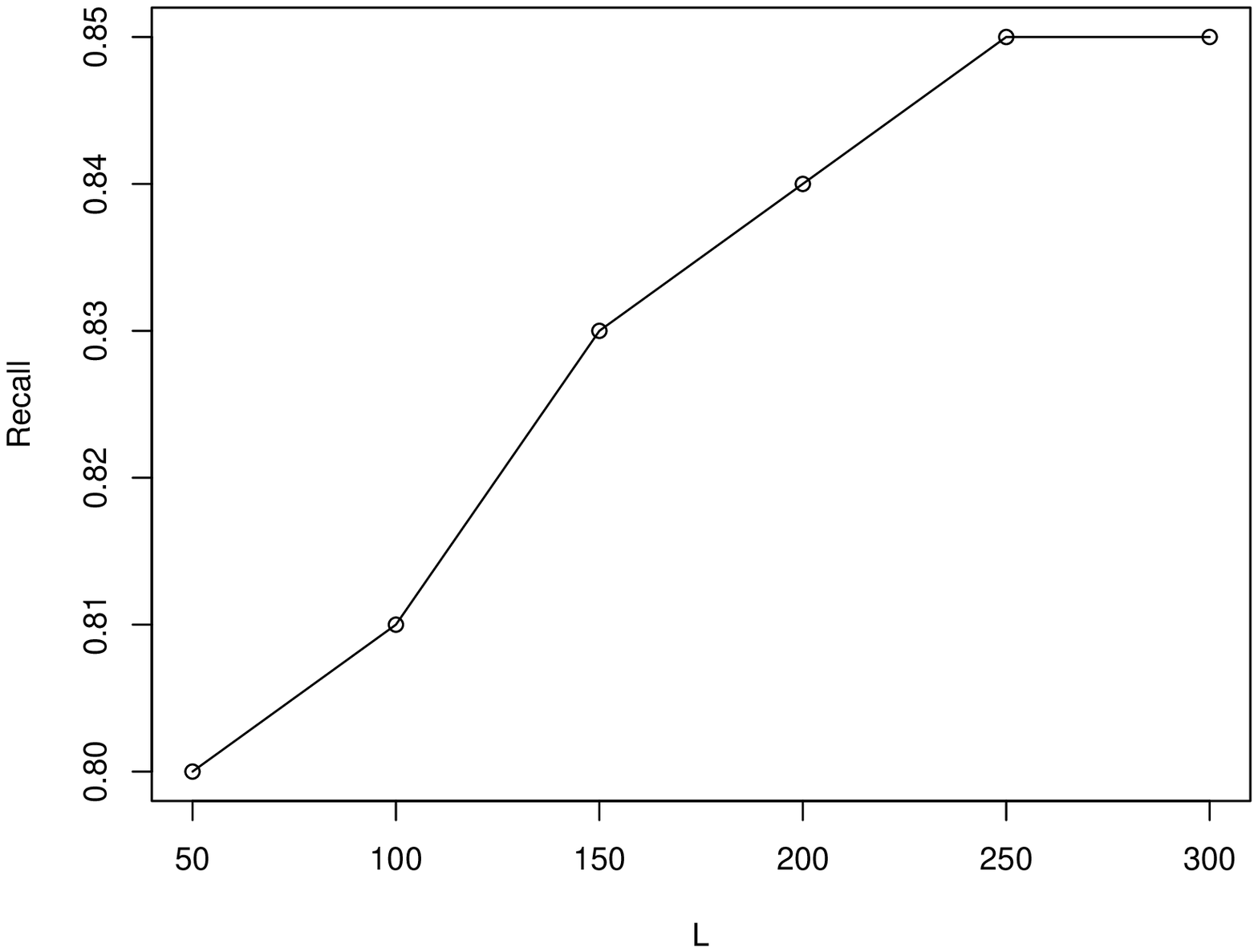, width = 0.3\linewidth, height = 1.4in} \label{Wiki:recall}}
\subfigure[Wiki: shuffle size]{ \epsfig{file = 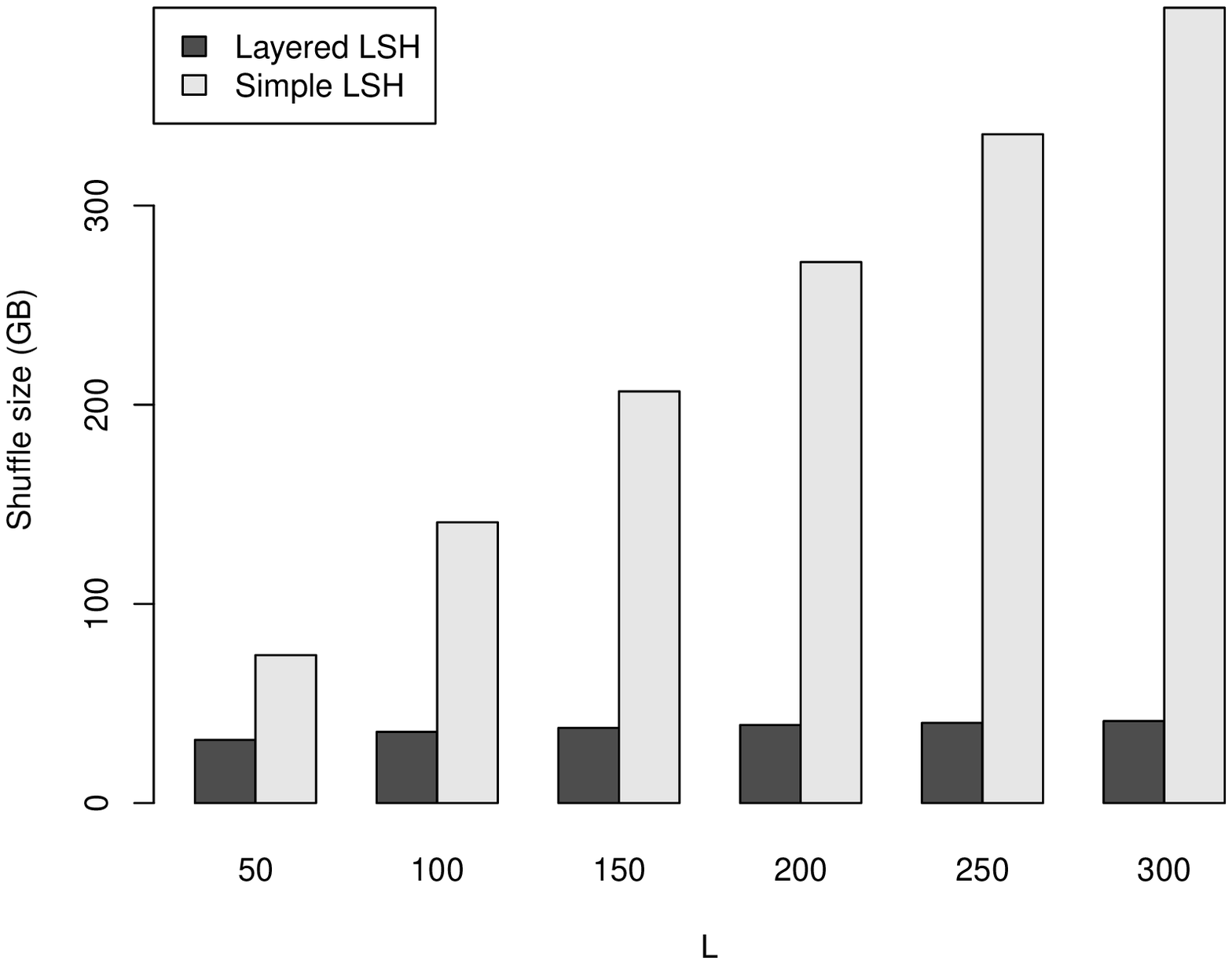, width = 0.3\linewidth, height = 1.4in} \label{Wiki:shuffle}} 
\subfigure[Wiki: runtime]{ \epsfig{file = 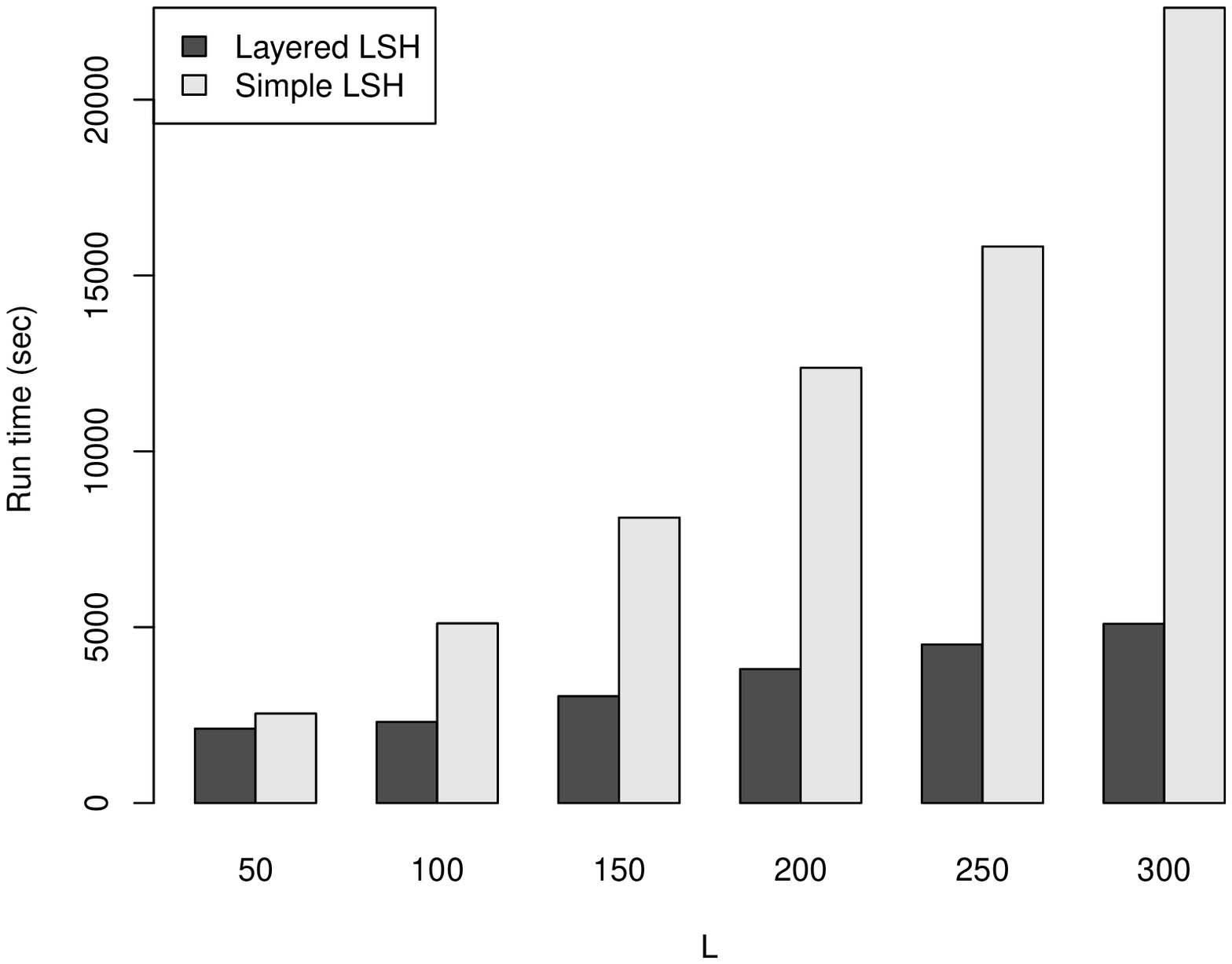, width = 0.3\linewidth, height = 1.4in} \label{Random:runtime}} 
\\
\subfigure[Image: recall] { \epsfig{file = 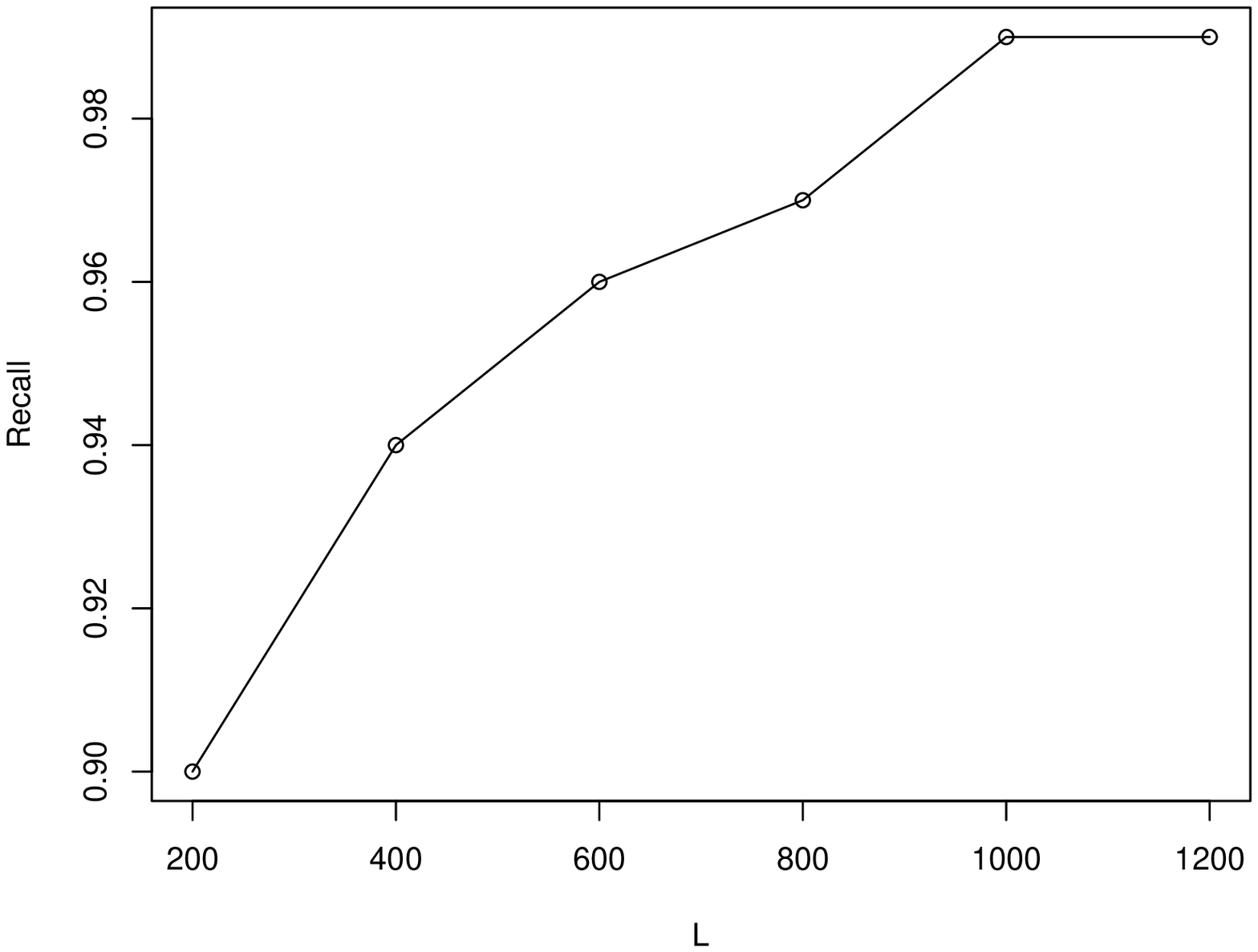, width = 0.3\linewidth, height = 1.4in} \label{Image:recall}} 
\subfigure[Image: shuffle size]{ \epsfig{file = 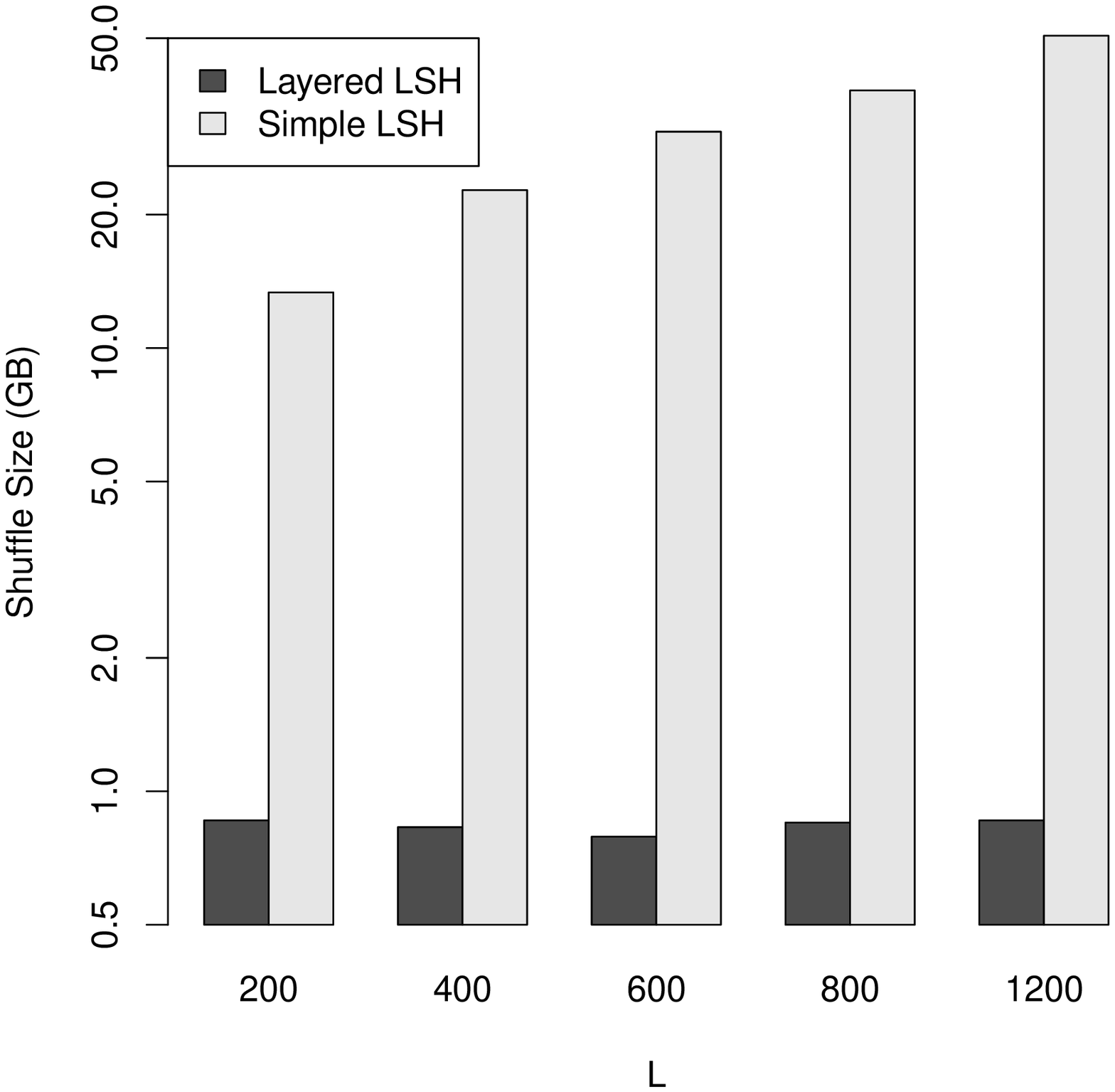, width = 0.3\linewidth, height = 1.4in} \label{Image:shuffle}} 
\subfigure[Image: runtime]{ \epsfig{file = 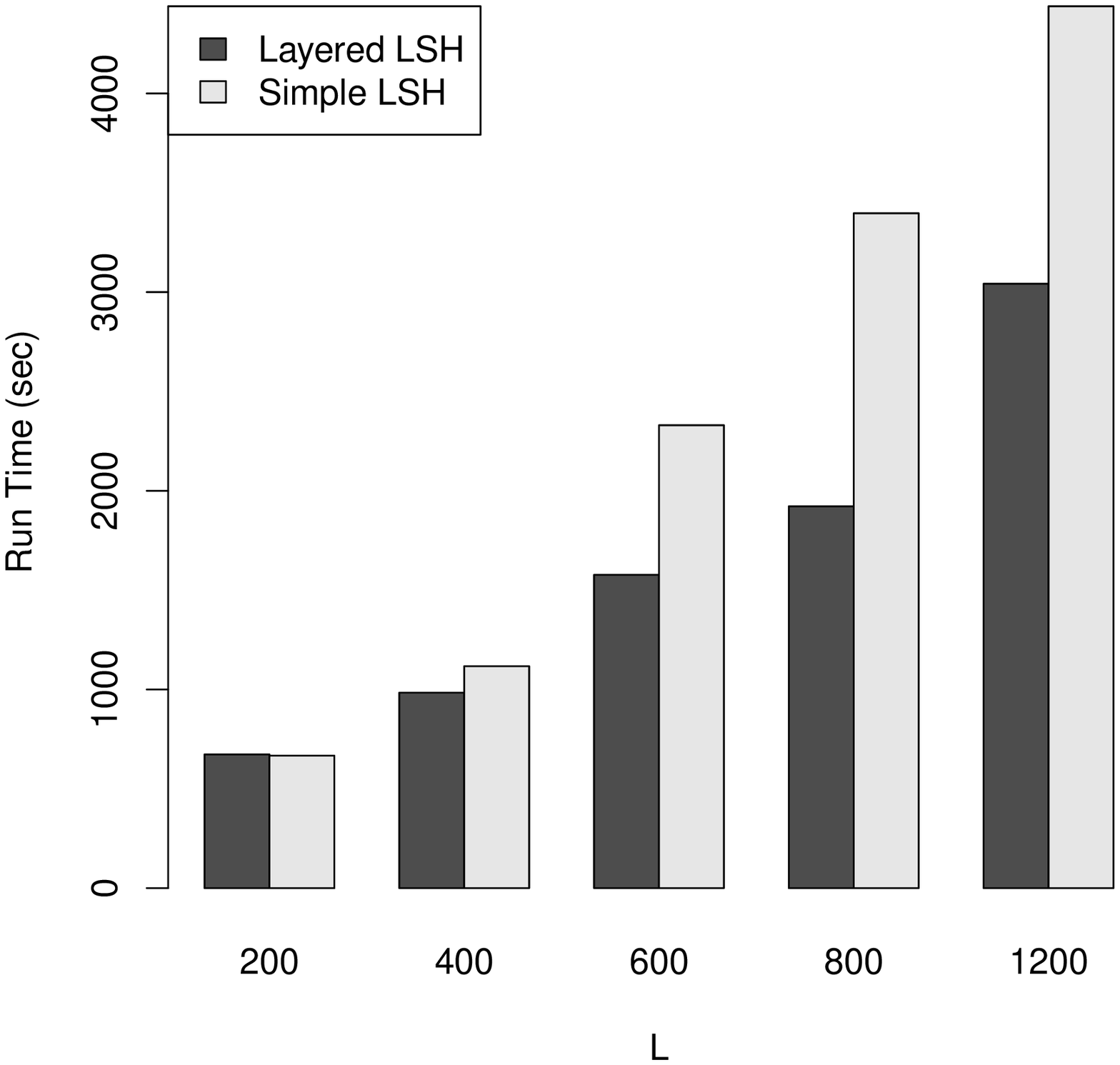, width = 0.3\linewidth, height = 1.4in} \label{Image:runtime}} 
\caption{Variation in recall, shuffle size and wall-clock run time with $L$ for Random, Wiki and Image data sets.}
\label{L}
\end{centering}
\end{figure*}

%\begin{figure*}[ht]
%\begin{center}
%\subfigure[Random: runtime]{ \epsfig{file = pods/random_runtime.eps, width = 0.3\linewidth} \label{Random:runtime}} 
%\subfigure[Image: runtime, $r = 0.08$]{ \epsfig{file = pods/image_runtime_8.eps, width = 0.3\linewidth} \label{Image:runtime:8}} 
%\subfigure[Image: runtime, $r = 0.15$]{ \epsfig{file = pods/image_runtime_15.eps, width = 0.3\linewidth} \label{Image:runtime:16}} 
%\caption{Variation in ``wall-clock" runtime with increasing $L$. Layered LSH yields a on average, a factor $4$ improvement for the Random data set and a factor $2$ improvement for the Image data set. }
%\label{L:runtime}
%\end{center}
%\end{figure*}

In this section, we present an experimental comparison of Simple and Layered LSH via the MapReduce framework with respect to the network cost (shuffle size) and ``wall-clock" run time for a number of data sets. Secondly, we compare Layered LSH in Section~\ref{sec:distlsh} with {\bf Sum} and {\bf Cauchy} distributed LSH schemes described in Haghani et al. \cite{Haghani}. Finally, we also analyze the results by considering the load balance properties.  

\subsection{Datasets} 
First, we describe the data sets we used.   
\begin{itemize}
\item{ \bf Random:}  
This data set is constructed by sampling points from $N^d({\bf 0},1)$\footnote{$N^d({\bf 0},r)$ denotes the normal distribution around the origin, ${\bf 0} \in \mathbb{R}^d$, where the $i$-th coordinate of a randomly chosen point has the distribution $\mathcal{N}(0,r/\sqrt{d})$, $\forall i \in 1 \ldots d$} with $d = 100$ and the queries are generated by adding a small perturbation drawn from $N^d({\bf 0},r)$ to a randomly chosen data point, where $r = 0.3$. We use 1M data points and 100K queries. This ``planted'' data set has been used for LSH experiments in \cite{DIIM04} and we solve the $(c,r)$-NN problem on it with $c = 2$. The parameter choice is such that for each query point, the expected distance to its closest data point is $r$ and that with high probability only that data point is within distance $cr$ from it.

\item{\bf Wiki\footnote{http://download.wikimedia.org}:}
We use the English Wikipedia corpus from February 2012 to compute TF-IDF vectors for each document in it after removing stop words, stemming, and removing insignificant words (appearing fewer than $20$ times in the corpus). We partition the 3.75M articles in the corpus randomly into a data set of size 3M and a query set of size 750K. We solve the $(c,r)$-NN problem with $r = 0.1$ and $c = 2$. 

\item{\bf Image \cite{TinyImages}:} 
The Tiny Image Data set consists of almost $80$M ``tiny" images of size $32\times 32$ pixels \cite{TinyImages}. We extract a 64-dimensional color histogram from each image in this data set using the {\it extractcolorhistogram} tool in the FIRE image search engine, as described in \cite{Charikar:multiprobe, Fire} and normalize it to unit norm in the preprocessing step. 1M Data points and 200K queries are sampled randomly ensuring no overlap. The avg. distance of a query to its closest data point is estimated, through sampling, to be $0.08$ (with standard deviation $0.07$), and hence, we solve the $(c,r)$-NN problem on this data set with $(r = 0.08, c = 2)$.  
\end{itemize}

\comment{
\begin{center}
\begin{tabular}{ l | c | r }
  Data set & No. of data points & No. of queries \\
  \hline                        
  Random & 1M & 100K \\
  Wiki & 3M & 750K \\
  Image & 1M & 200K \\
\end{tabular}
\end{center}
}

\subsection{Implementation Details}

We perform experiments on a small cluster of $13$ compute nodes using Hadoop \cite{Hadoop} with $800$MB JVMs to implement Map and Reduce tasks. Consistency in the choice of hash functions $H,G$ (Section \ref{sec:distlsh}) as well as offsets across mappers and reducers is ensured by setting the seed of the random number generator appropriately. 

We choose the LSH parameters $(W = 0.5, k = 10)$ for the Random data set, $(W = 0.3, k = 16)$ for the Image data set, and $(W = 0.5, k = 12)$ for the Wiki data set according to the calculations in \cite{P06}, and experiments in \cite{Charikar:multiprobe, bayes}. We optimized $D$, the parameter of Layered LSH, using a simple binary search to minimize the wall-clock run time.%while noting that a systematic understanding of the choice of parameters for Layered LSH (as well as Basic LSH) is an important research direction.

Since the underlying dimensionality (vocabulary size $549532$) for the Wiki data set is large, we use the Multi-Probe LSH (MPLSH) \cite{Charikar:multiprobe} as our first layer of hashing for that data set. We discuss MPLSH in detail in Section \ref{sec:rel}. 
We measure the accuracy of search results by computing the recall, i.e., the fraction of query points with at least one data point within a distance $r$, returned in the output. 

\subsection{Results} 
\begin{figure}
\begin{center}
\epsfig{file = 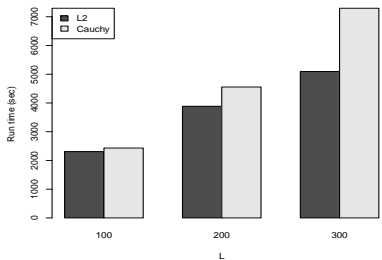, width = 0.65\linewidth, height = 1.6in} 
\caption{Variation in wall-clock run time for the ``Sum'', ``Cauchy'' and Layered LSH schemes with increasing $L$ }
\label{Layer:runtime}
\end{center}
\end{figure}

\begin{enumerate}
\item{\bf Comparison with Simple LSH:} 
Figure \ref{L} describes the results of scaling $L$, the number of offsets per query, on the recall, shuffle size and wall-clock run time using a single hash table. Note that the recall of Entropy LSH can be improved by using $O(1)$ hash tables. Since improving recall is not the main aspect of this paper, we use just a single hash table for all our experiments. We observe that even with a crude binary search for $D$, on average Layered LSH provides a factor 3 improvement over simple LSH in the wall-clock run time on account of a factor 10 or more decrease in the shuffle size. Further, improving recall by increasing $L$ results in a linear increase in the shuffle size for Simple LSH, while, the shuffle size for Layered LSH remains almost constant. This observation verifies Theorem \ref{thm:fq} and Remark $9$. Note that since Hadoop offers checkpointing guarantees where (Key, Value) pairs output by mappers and reducers may be written to the disk, Layered LSH also decreases the amount of data written to the distributed file system. 

\item{\bf Comparison with the ``Sum'' and ``Cauchy'' distributed LSH schemes described in \cite{Haghani}:}
We compare Layered LSH with {\bf Sum} and {\bf Cauchy} distributed LSH schemes described in Haghani et al. \cite{Haghani}. Figure \ref{Layer:runtime} shows that Layered LSH compares favorably with {\bf Cauchy} scheme \footnote{associated parameter chosen via a crude binary search to minimize runtime} for the Wiki data set. The MapReduce job for {\bf Sum} failed due to reduce task running out of memory, indicating load imbalance\footnote{Recall that reduce tasks store data points in memory}. This can be also seen as a manifestation of ``the curse of the last reducer'' \cite{curselast}.

\end{enumerate} 

{\bf Load Balance:}
Next, we discuss the distribution (average and max) of data points in the Wiki data set to $1024$ reduce tasks for the different distribution schemes. 

\begin{table}
	\begin{center}
\begin{tabular}{ l | c | r }
   & Average & Max \\
  \hline                        
  Simple LSH & 3K & 9K \\
  Sum & 3K & 214K \\
  Cauchy & 3K & 45K \\
  Layered LSH & 3K & 103K \\
\end{tabular}
\caption{Wiki data set: distribution of data points across $1024$ reduce tasks}
\label{table:lb}
\end{center}
\end{table}

First, Table \ref{table:lb} above demonstrates {\bf Sum} has the most imbalanced load distribution, explaining its failure on MapReduce. Second, Simple LSH, while having the best load balance for the Wiki data set, incurs a large network cost in order to achieve this load balance. In contrast, Layered LSH offers a tunable way of trading off load balance to decrease network cost and minimize the wall-clock run time. Although {\bf Cauchy} compares favorably to Layered LSH in load balance, it is worse of in running time. In addition, it is not clear if it is possible to provide any theoretical guarantees for the {\bf Cauchy} scheme.

%Network calls (I/O) are often the bottleneck for large scale applications which require to processing huge batches queries, or real time systems where queries arrive at a high rate and experiments presented in this section show that Layered LSH may lead to large improvement in such settings. 

\section{Related Work}
\label{sec:rel}  

\comment{
The methods proposed in the literature for similarity search via nearest neighbors (NN) can be, broadly speaking, classified to data or space partitioning methods and Locality Sensitive Hashing (LSH) techniques. Space partitioning methods, such as $K$-D trees \cite{Bentley:kdtree}, and data partitioning methods, such as R-trees \cite{Rtrees} and SR-trees \cite{SRtrees}, solve the NN problem by pruning the set of candidates for each query using branch and bound techniques. However, they do not scale well with the data dimension, and can in fact be even slower than a brute force sequential scan of the data set when the dimension exceeds $10$ \cite{expt:space:partitioning}. For further details on these methods the reader is referred to the survey by Fodor \cite{Fodor}. 
}

Locality Sensitive Hashing (LSH) was introduced by Indyk and Motwani in order to solve high dimensional similarity search problems \cite{im98}. 
LSH indexing methods are based on LSH families of hash functions for which near points have a higher likelihood of hashing to the same value. Then, $(c,r)$-NN problem can be solved by using multiple hash tables. Gionis et al. \cite{gim99} showed that in the Euclidian space $O(n^{1/c})$ hash tables suffice, which was later improved, by Datar et al. \cite{DIIM04}, to $O(n^{\beta/c})$ (for some $\beta<1$), and further, by Andoni and Indyk \cite{AI06}, to $O(n^{1/{c^2}})$ which almost matches the lower bound proved by Motwani et al. \cite{MNP06}. LSH families are also known for several non-Euclidian metrics, such as Jaccard distance \cite{Broder} and cosine similarity \cite{C02}. 

The main problem with LSH indexing is that to guarantee a good search quality, it requires a large number of hash tables. This entails a large index space requirement, and in the distributed setting, also a large amount of network communication per query. To mitigate the space inefficiency, Panigrahy \cite{P06} proposed Entropy LSH which, by also looking up the hash buckets of $O(n^{2/c})$ random query ``offsets'', requires just $\tilde{O}(1)$ hash tables, and hence provides a large space improvement. But, Entropy LSH does not help with and in fact worsens the network inefficiency of conventional LSH: each query, instead of $O(n^{1/c})$ network calls, one per hash table, requires $O(n^{2/c})$ calls, one per offset. Our Layered LSH scheme exponentially improves this and, while guaranteeing a good load balance, requires only $O(\sqrt{\log n})$ network calls per query.

To reduce the number of offsets required by Entropy LSH, Lv et al. \cite{Charikar:multiprobe} proposed the Multi-Probe LSH (MPLSH) heuristic, in which a query-directed probing sequence is used instead of random offsets. They experimentally show this heuristic improves the number of required offset lookups. In a distributed setting, this translates to a smaller number of network calls per query and Layered LSH can be implemented by using MPLSH instead of Entropy LSH as the first ``layer'' of hashing, as demonstrated by experiments on the Wiki data set in section \ref{sec:exp}. Hence, the benefits of the two methods can be combined in practice.

\comment{
To reduce the number of offsets required by Entropy LSH, Lv et al. \cite{Charikar:multiprobe} proposed the Multi-Probe LSH (MPLSH) heuristic, in which a query-directed probing sequence is used instead of random offsets. They experimentally show this heuristic improves the number of required offset lookups. In a distributed setting, this translates to a smaller number of network calls per query, which is the main goal in Layered LSH as well. Clearly, Layered LSH can be implemented by using MPLSH instead of Entropy LSH as the first ``layer'' of hashing. Hence, the benefits of the two methods can be combined in practice, as demonstrated by experiments on the Wiki data set in section \ref{sec:exp}. However, the experiments by Lv et al. \cite{Charikar:multiprobe} show a modest constant factor reduction in the number of offsets compared to Entropy LSH. Hence, since Layered LSH, besides a theoretical exponential improvement, experimentally shows a factor $100$ reduction in network load compared to Entropy LSH, it is not clear if the marginal benefit from switching the first layer to MPLSH is significant enough to justify its much more complicated offset generation. Furthermore, MPLSH has no theoretical guarantees, while using Entropy LSH as the first layer of hashing allows for the strong theoretical guarantees on both network cost and load balance of Layered LSH proved in this paper. Hence overall, in this paper, we focus on Entropy LSH as the first layer of hashing in Layered LSH.}

Haghani et al. \cite{Haghani} describe the {\bf Sum} and {\bf Cauchy} schemes which map LSH buckets to peers in p2p networks in order to minimize network costs. However, in contrast to Layered LSH, no guarantees on network cost and load balance are provided. In this paper, we show via MapReduce experiments on the Wiki data set that {\bf Sum} distributes data unevenly and thus may load some of the reduce tasks. In addition we also describe experiments which demonstrate that Layered LSH compares favorably with {\bf Cauchy} on this data set. 

\comment{Haghani et al. \cite{Haghani} and Wadhwa et al. \cite{wadhwa} study implementations of LSH via distributed hash tables on p2p networks aimed at improving network load by careful distribution of buckets on peers. However, they only provide heuristic methods restricted to p2p networks. Even though the basic idea of putting nearby buckets on the same machines is common between their heuristics and our Layered LSH, our scheme works on any framework in the general distributed (Key, Value) model (including not only MapReduce and Active DHT but also p2p networks) and provides the strong theoretical guarantees proved in this work for network efficiency and load balance in this general distributed model.}

\comment{

1. Layered LSH can be simply implemented on top of Multi-Probe LSH
2. Multi-Probe LSH has no theoretical guarantee
3. Based on Lv et al.'s experiments, it only provides a small constant factor improvement in the number of required offsets, while Layered LSH provably provides an exponential improvement in the number of network calls, and in practice also improves the network communication by a factor of $100$.  
4. Generating the probing sequence is more computationally intensive than just generating random offsets. This can further slow down the scheme in the distributed setting in which offsets need to be regenerated. 

LSH families are known for several other metric spaces like min-wise independent permutations for Jaccard similarity \cite{Broder} and SimHash for dot product or cosine similarity \cite{C02}. As noted earlier, LSH based indexing schemes require a large number of hash tables for accurate search results and a simple distributed implementation of the LSH indexing scheme (Simple LSH in Section \ref{sec:bckgrnd}, and described in \cite{google-news} ) will cause a large load on the network while processing queries. 
}

\comment{
Indyk and Motwani introduced Locality Sensitive Hashing (LSH) which involves hashing points to buckets such that near by points have a high chance of getting hashed to the same bucket \cite{im98}. Given parameters $c,r,p_1$ and $p_2$, a $(c,rc,p_1,p_2)$-LSH is family of hash functions such that the probability that two points separated by a distance at most $r$ are hashed to the same bucket is at least $p_1$, while the probability that two points separated by a distance at least $cr$ are hashed to the same bucket is at most $p_2$.  Gionis \etal showed that using a $(c,rc,p_1,p_2)$-LSH, the $(c,r)$-NN problem in the Euclidean space can be solved by the construction of an index consisting of $O(n^{1/c})$ hash tables with run time $\tilde{O}(dn^{1/c})$ \cite{gim99}. The exponent exponent $1/c$ was improved to $\beta/c$ for some constant $0 < \beta < 1$ by Datar \etal  using LSH family based on $p$-stable distributions \cite{DIIM04} and further improved to $1/c^2$ by Andoni and Indyk \cite{AI06}, which is almost optimal considering the lower bound proved by Motwani \etal \cite{MNP06}. In addition to the Euclidean space, LSH families are known for several other metric spaces like min-wise independent permutations for Jaccard similarity \cite{Broder} and SimHash for dot product similarity \cite{C02}. 
}

\comment{
In order to improve space efficiency of LSH based methods, Panigrahy proposed the Entropy LSH scheme which randomly chooses $O(n^{2/c})$ ``offsets" in the neighborhood of each query and searches the buckets to which these offsets get hashed to and showed that the $(c,r)$-NN problem can be solved using only $\tilde{O}(1)$ hash tables using Entropy LSH. However this improvement in space efficiency comes at the cost of a higher network load ($O(n^{2/c})$ from $O(n^{1/c})$ per query) for the Entropy LSH( in a simple distributed LSH implementation (\ref{sec:bckgrnd}). In this paper, we describe an alternate distributed implementation, Layered LSH, which incurs only $O(\sqrt{\log{n}})$ network cost per query, an exponential improvement. Remarkably, increasing the number of offsets per query for improving the accuracy of LSH, does not increase the per query network cost of Layered LSH.

Based on Panigrahy's work, Lv \etal described a query directed probing sequence for searching buckets for improving query processing  \cite{Charikar:multiprobe}. However, their approach is heuristic and they demonstrate a only constant factor reduction in the number of hash tables/offsets required. In contrast, Layered LSH offers a provable guarantee on network efficiency and an exponential improvement in network cost. We also note that the query dependent probing approach is orthogonal to our contribution, and our methods can easily be adapted to implement a query directed probing sequence on each node of a distributed LSH implementation to improve space efficiency and query processing.

Haghani \etal, and Wadhwa \etal describe implementation of LSH via distributed hash tables on p2p networks aimed at improving network load by efficient distribution of buckets on peers and develop query processing algorithms in this setting \cite{Haghani, wadhwa}. However, these approaches are heuristic and restricted to p2p networks in contrast to provable guarantees on network efficiency and load balance described in this work. Also, our algorithms and analyses are in the abstract (Key, Value) setting and imply similar guarantees in particular case of p2p networks. 

}

\section{Conclusions}
\label{sec:conc}

We presented and analyzed Layered LSH, an efficient distributed implementation of LSH similarity search indexing. We proved that, compared to the straightforward distributed implementation of LSH, Layered LSH exponentially improves the network load, while maintaining a good load balance. Our analysis also showed that, surprisingly, the network load of Layered LSH is independent of the search quality. Our experiments confirmed that Layered LSH results in significant network load reductions as well as runtime speedups.  

\comment{
In this paper, we describe a network efficient distributed implementation, Layered LSH, which exponentially reduces network costs relative to the naive implementation while maintaining load balance. Surprisingly, we show that the network cost associated with Layered LSH are independent of the desired accuracy of LSH. 
}

\bibliographystyle{abbrv}
\bibliography{lsh}

\begin{thebibliography}{10}

\bibitem{Hadoop}
Hadoop. http://hadoop.apache.org.

\bibitem{TinyImages}
http://horatio.cs.nyu.edu/mit/tiny/data/index.html.

\bibitem{Storm}
https://github.com/nathanmarz/storm/.

\bibitem{AI06}
A.~Andoni and P.~Indyk.
\newblock Near optimal hashing algorithms for approximate nearest neighbor in
  high dimensions.
\newblock FOCS '06.

\bibitem{Bentley:kdtree}
J.~Bentley.
\newblock Multidimensional binary search trees used for associative searching.
\newblock {\em Communications of the ACM}, 1975.

\bibitem{berkhin}
P.~Berkhin.
\newblock {\em A Survey of Clustering Data Mining Techniques}.
\newblock Springer, 2002.

\bibitem{Kakade:covertrees}
A.~Beygelzimer, S.~Kakade, and J.~Langford.
\newblock Cover trees for nearest neighbours.
\newblock ICML '06.

\bibitem{Broder}
A.~Z. Broder, M.~Charikar, A.~M. Frieze, and M.~Mitzenmacher.
\newblock Min-wise independent permutations.
\newblock STOC '98.

\bibitem{Buhler}
J.~Buhler.
\newblock Efficient large scale sequence comparison by locality-sensitive
  hashing.
\newblock {\em Bioinformatics}, 17:419--428, 2001.

\bibitem{C02}
M.~Charikar.
\newblock Similarity estimation techniques from rounding algorithms.
\newblock STOC '02.

\bibitem{google:video:lsh}
M.~Covell and S.~Baluja.
\newblock Lsh banding for large-scale retrieval with memory and recall
  constraints.
\newblock ICASSP '09.

\bibitem{cover67}
T.~Cover and P.~Hart.
\newblock Nearest neighbour pattern classification.
\newblock {\em IEEE Transactions on Information Theory}, 13(1):21--27, 1967.

\bibitem{google-news}
A.~Das, M.~Datar, A.~Garg, and S.~Rajaram.
\newblock Google news personalization: Scalable online collaborative
  filetering.
\newblock WWW '07.

\bibitem{Dasgupta_JL}
S.~Dasgupta and A.~Gupta.
\newblock An elementary proof of a theorem of johnson and lindenstrauss.
\newblock {\em Random Struct. Algorithms}, '03.

\bibitem{DIIM04}
M.~Datar, N.~Immorlica, P.~Indyk, and V.~Mirrokni.
\newblock Locality sensitive hashing scheme based on p-stable distributions.
\newblock SoCG '04.

\bibitem{mapreduce}
J.~Dean and S.~Ghemawat.
\newblock Mapreduce: simplified data processing on large clusters.
\newblock OSDI '04.

\bibitem{Fire}
T.~Deselaers, D.~Keysers, and H.~Ney.
\newblock Features for image retrieval: An experimental comparison.
\newblock In {\em Information Retrieval}, volume~11, pages 77--107. Springer,
  2008.

\bibitem{gim99}
A.~Gionis, P.~Indyk, and R.~Motwani.
\newblock Similarity search in high dimensions via hashing.
\newblock VLDB '99.

\bibitem{Rtrees}
A.~Guttman.
\newblock R-trees: a dynamic index structure for spatial searching.
\newblock SIGMOD '84, pages 47--57.

\bibitem{Haghani}
P.~Haghani, S.~Michel, and K.~Aberer.
\newblock Distributed similarity search in high dimensions using locality
  sensitive hashing.
\newblock EDBT '09.

\bibitem{im98}
P.~Indyk and R.~Motwani.
\newblock Approximate nearest neighbors: Towards removing the curse of
  dimensionality.
\newblock STOC '98.

\bibitem{SRtrees}
N.~Katayama and S.~Satoh.
\newblock The sr-tree: an index structure for high-dimensional nearest neighbor
  queries.
\newblock SIGMOD '97.

\bibitem{Krauthgamer:lee:navigating:nets}
R.~Krauthgamer and J.~Lee.
\newblock Navigating nets:simple algorithms for proximity search.
\newblock SODA '04.

\bibitem{Brian:Kulis}
B.~Kulis and K.~Grauman.
\newblock Kernelized locality-sensitive hashing for scalable image search.
\newblock ICCV '09.

\bibitem{KOR98}
E.~Kushilevitz, R.~Ostrovsky, and Y.~Rabani.
\newblock Efficient search of approximate nearest neighbor in high dimensional
  spaces.
\newblock STOC '98.

\bibitem{OnePassMR}
B.~Li, E.~Mazur, Y.~Diao, A.~McGregor, and P.~Shenoy.
\newblock A platform for scalable one-pass analytics using mapreduce.
\newblock SIGMOD '11, pages 985--996.

\bibitem{Charikar:multiprobe}
Q.~Lv, W.~Josephson, Z.~Wang, M.~Charikar, and K.~Li.
\newblock Multi-probe lsh: Efficient indexing for high-dimensional similarity
  search.
\newblock VLDB '07.

\bibitem{MNP06}
R.~Motwani, A.~Naor, and R.~Panigrahi.
\newblock Lower bounds on locality sensitive hashing.
\newblock SoCG '06.

\bibitem{P06}
R.~Panigrahi.
\newblock Entropy based nearest neighbor search in high dimensions.
\newblock SODA '06.

\bibitem{ravi:NLP:clustering}
D.~Ravichandran, P.~Pantel, and E.~Hovy.
\newblock Using locality sensitive hash functions for high speed noun
  clustering.
\newblock ACL '05.

\bibitem{bayes}
V.~Satuluri and S.~Parthasarathy.
\newblock Bayesian locality sensitive hashing for fast similarity search.
\newblock VLDB '12.

\bibitem{curselast}
S.~Suri and S.~Vassilvitski.
\newblock Counting triangles and the curse of the last reducer.
\newblock WWW '11.

\bibitem{expt:space:partitioning}
R.~Weber, H.~Schek, and S.~Blott.
\newblock A quantititative analysis and performance study for similarity search
  methods in high dimensional spaces.
\newblock VLDB '98.

\end{thebibliography}

\end{document}